\newtheorem{theorem}{Theorem}[section]
\newtheorem{claim}[theorem]{Claim}
\newtheorem{lemma}[theorem]{Lemma}
\newtheorem{remark}[theorem]{Remark}
\newcommand{\qedsymb}{\hfill{\rule{2mm}{2mm}}}
\renewenvironment{proof}[1][]{\begin{trivlist}
\item[\hspace{\labelsep}{\bf\noindent Proof#1:\/}] }{\qedsymb\end{trivlist}}
\def\calB{{\cal B}}
\def\calC{{\cal C}}
\def\calG{{\cal G}}
\def\calZ{{\cal Z}}
\def\calC{{\cal C}}
\def\calE{{\cal E}}
\def\R{\mathbb{R}}
\newcommand\Alpha[3]{\alpha(#1,#2,#3)}
\newcommand\Prob[2]{{\Pr_{#1}\left[ {#2} \right]}}
\newcommand{\eps}{\epsilon}
\renewcommand{\epsilon}{\varepsilon}
\newcommand{\linspan}{\mathop{\mathrm{span}}}
\newcommand{\Fset}{\mathbb{F}}         % The integers
\begin{document}

\title{{\bf Nearly Orthogonal Sets over Finite Fields}
}
\author{
Dror Chawin\thanks{School of Computer Science, The Academic College of Tel Aviv-Yaffo, Tel Aviv 61083, Israel. Research supported by the Israel Science Foundation (grant No.~1218/20).}
\and
Ishay Haviv\footnotemark[1]
}

\date{}

\maketitle

\begin{abstract}
For a field $\Fset$ and integers $d$ and $k$, a set of vectors of $\Fset^d$ is called $k$-nearly orthogonal if its members are non-self-orthogonal and every $k+1$ of them include an orthogonal pair. We prove that for every prime $p$ there exists a positive constant $\delta = \delta (p)$, such that for every field $\Fset$ of characteristic $p$ and for all integers $k \geq 2$ and $d \geq k^{1/(p-1)}$, there exists a $k$-nearly orthogonal set of at least $d^{\delta \cdot k^{1/(p-1)}/ \log k}$ vectors of $\Fset^d$. In particular, for the binary field we obtain a set of $d^{\Omega( k /\log k)}$ vectors, and this is tight up to the $\log k$ term in the exponent. For comparison, the best known lower bound over the reals is $d^{\Omega( \log k / \log \log k)}$ (Alon and Szegedy,~Graphs and Combin.,~1999). The proof combines probabilistic and spectral arguments.
\end{abstract}

\section{Introduction}

For a field $\Fset$ and an integer $d$, two vectors $x,y \in \Fset^d$ are said to be orthogonal if they satisfy $\langle x, y \rangle = 0$ with respect to the inner product $\langle x, y \rangle = \sum_{i=1}^{d}{x_i y_i}$, where the arithmetic operations are over $\Fset$. If the vector $x$ satisfies $\langle x,x \rangle =0$, then we say that $x$ is self-orthogonal and otherwise, it is non-self-orthogonal.
For an integer $k$, a set $\calG \subseteq \Fset^d$ is called $k$-nearly orthogonal if its vectors are non-self-orthogonal and every subset of $k+1$ members of $\calG$ includes an orthogonal pair.
Let $\Alpha{d}{k}{\Fset}$ denote the maximum possible size of a $k$-nearly orthogonal subset of $\Fset^d$.
It can be easily seen that for $k=1$, it holds that $\Alpha{d}{1}{\Fset} = d$ for every field $\Fset$ and for every integer $d$. Indeed, the lower bound on $\Alpha{d}{1}{\Fset}$ follows by considering the $d$ vectors of the standard basis of $\Fset^d$, and the upper bound holds because a set of non-self-orthogonal vectors of $\Fset^d$ that are pairwise orthogonal is linearly independent, hence its size cannot exceed $d$.

A simple upper bound on $\Alpha{d}{k}{\Fset}$ follows from Ramsey theory.
Recall that for integers $s_1$ and $s_2$, the Ramsey number $R(s_1,s_2)$ is the smallest integer $r$ such that every graph on $r$ vertices has an independent set of size $s_1$ or a clique of size $s_2$.
For a given $k$-nearly orthogonal set $\calG \subseteq \Fset^d$, consider the graph on the vertex set $\calG$, where two distinct vertices are adjacent if and only if their vectors are not orthogonal.
Since the vectors of $\calG$ are non-self-orthogonal and lie in $\Fset^d$, this graph has no independent set of size $d+1$, and since every $k+1$ members of $\calG$ include an orthogonal pair, this graph has no clique of size $k+1$. We thus obtain that
\begin{eqnarray}\label{eq:Ramsey}
\Alpha{d}{k}{\Fset} < R(d+1,k+1) \leq \binom{d+k}{k},
\end{eqnarray}
where the second inequality follows by a famous upper bound on Ramsey numbers due to Erd{\H{o}}s and Szekeres~\cite{ErdosS35}.
Note that for every fixed constant $k$, it follows from~\eqref{eq:Ramsey} that $\Alpha{d}{k}{\Fset} \leq O(d^k)$.

The problem of determining the values of $\Alpha{d}{k}{\Fset}$ for $\Fset$ being the real field $\R$ was proposed by Erd{\H{o}}s in 1988 (see~\cite{NR97Erdos}).
For all integers $d$ and $k$, it holds that
\begin{eqnarray}\label{eq:alpha>=kd}
\Alpha{d}{k}{\R} \geq k \cdot d,
\end{eqnarray}
as follows by considering the union of $k$ pairwise disjoint orthogonal bases of $\R^d$.
Erd{\H{o}}s conjectured that this bound is tight for $k=2$, that is, $\Alpha{d}{2}{\R}=2 \cdot d$, and his conjecture was confirmed a few years later by Rosenfeld~\cite{Rosenfeld91} (see also~\cite{Deaett11}). In 1992, F{\"{u}}redi and Stanley~\cite{FurediS92} showed that $\Alpha{4}{5}{\R} \geq 24$, which implies that the bound in~\eqref{eq:alpha>=kd} is not tight in general, and conjectured that there exists a constant $c$ such that $\Alpha{d}{k}{\R} \leq (k \cdot d)^c$ for all integers $d$ and $k$. Their conjecture was disproved in 1999 by Alon and Szegedy~\cite{AlonS99}, who applied the probabilistic method to prove that there exists a constant $\delta > 0$ such that for all integers $d$ and $k \geq 3$, it holds that
\[\Alpha{d}{k}{\R} \geq d^{\delta \cdot \log k/ \log \log k},\]
where here and throughout, all logarithms are in base $2$.
More recently, Balla, Letzter, and Sudakov~\cite{BallaLS20} proved that $\Alpha{d}{k}{\R} \leq O(d^{(k+1)/3})$ for every fixed constant $k$, improving on the upper bound of~\eqref{eq:Ramsey} for the real field. Balla~\cite{Balla23} further provided a bipartite analogue of the aforementioned result of~\cite{AlonS99}.
Nevertheless, the currently known upper and lower bounds on $\Alpha{d}{k}{\R}$ for general $d$ and $k$ are rather far apart (see, e.g.,~\cite[Chapter~10.2]{LovaszBook}).

Soon after the appearance of~\cite{AlonS99}, Codenotti, Pudl{\'{a}}k, and Resta~\cite{CodenottiPR00} considered the question of determining the values of $\Alpha{d}{k}{\Fset}$ for finite fields $\Fset$, focusing on the case where $\Fset$ is the binary field $\Fset_2$ and $k=2$. Note that $\Alpha{d}{2}{{\Fset_2}}$ can be formulated as the largest possible size of a family of subsets of $[d]=\{1,\ldots,d\}$, where each subset has an odd size and among every three of the subsets there are two that intersect at an even number of elements.
While the original motivation of~\cite{CodenottiPR00} to construct large $2$-nearly orthogonal sets over $\Fset_2$ arrived from the rigidity approach to lower bounds in circuit complexity, this challenge enjoys additional diverse applications from the area of information theory, related to distributed storage, index coding, and hat-guessing games (see, e.g.,~\cite{BlasiakKL13,BargZ22,BargSY22,HuangX23} and Section~\ref{sec:related}).

Codenotti et al.~\cite{CodenottiPR00} provided an explicit construction of a $2$-nearly orthogonal subset of $\Fset_2^d$ of size $4d-8$ for every $d \geq 2$, implying that $\Alpha{d}{2}{{\Fset_2}} \geq 4d-8$, and asked whether this bound is tight. Another explicit construction, of size $16 \cdot \lfloor d/6 \rfloor$, was provided by Blasiak, Kleinberg, and Lubetzky~\cite{BlasiakKL13}.
The question of~\cite{CodenottiPR00} was answered negatively in~\cite{GolovnevH20}, where it was shown by an explicit construction that $\Alpha{d}{2}{{\Fset_2}} \geq d^{1+\delta}$ for some constant $\delta >0$ and for infinitely many integers $d$.
Recalling that $\Alpha{d}{2}{\R}=2 \cdot d$, this demonstrates that the value of $\Alpha{d}{k}{\Fset}$ behaves quite differently for different fields $\Fset$ already for $k=2$.

\subsection{Our Contribution}

In the present paper, we prove lower bounds on the quantities $\Alpha{d}{k}{\Fset}$ for finite fields $\Fset$.
For the binary field $\Fset_2$, we prove the following result.

\begin{theorem}\label{thm:F2}
There exists a constant $\delta >0$ such that for all integers $d \geq k \geq 2$, it holds that
\[ \Alpha{d}{k}{{\Fset_2}} \geq d^{\delta \cdot k/\log k}.\]
\end{theorem}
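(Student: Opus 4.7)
My plan is to combine a probabilistic construction with a spectral estimate in order to build a large $k$-nearly orthogonal set in $\Fset_2^d$. I will sample a random set $\calS \subseteq \Fset_2^d$ of vectors of a carefully chosen odd weight $w$ (so that each sampled vector is non-self-orthogonal), bound the expected number of $(k+1)$-cliques in the non-orthogonality graph of $\calS$, and then apply alteration to delete one vertex per such clique. For the regime $d \geq k^2$ this should deliver the target size, while for the complementary regime $k \leq d < k^2$ the elementary probabilistic argument (uniform random vectors in $\Fset_2^d$ conditioned on odd weight) already yields $2^{\Omega(k)}$ vectors, which exceeds $d^{\delta k/\log k}$ whenever $\log d = O(\log k)$.

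The central quantity is $P_{k+1}$, the probability that $k+1$ i.i.d.\ uniform weight-$w$ vectors in $\Fset_2^d$ are pairwise non-orthogonal (i.e.\ have pairwise odd intersections). A single pair is non-orthogonal with probability approximately $w^2/d$ when $w^2 \ll d$, but the $\binom{k+1}{2}$ pairwise events are far from independent: they are dominated by sunflower configurations in which the $k+1$ vectors share a common coordinate and are otherwise disjoint, contributing a term of order $w^{k+1}/d^k$ to $P_{k+1}$. I plan to expand
\[
P_{k+1} \;=\; 2^{-\binom{k+1}{2}} \sum_{S \subseteq \binom{[k+1]}{2}} (-1)^{|S|}\, \expectation\bigl[(-1)^{\sum_{(i,j) \in S}\langle v_i, v_j\rangle}\bigr],
\]
and, after accounting for the weight constraint, rewrite each inner expectation as the coefficient of $z_1^w \cdots z_{k+1}^w$ in $P_S(z)^d$, where $P_S(z) = \sum_{c \in \Fset_2^{k+1}} (-1)^{Q_S(c)} \prod_i z_i^{c_i}$ encodes the quadratic form $Q_S(c) = \sum_{(i,j) \in S} c_i c_j$ associated to the subgraph $S$ of $K_{k+1}$. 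The spectral part of the argument bounds these coefficients via the rank and eigenvalue structure of the bilinear forms $Q_S$, and relies on alternating cancellation across the sum to obtain a bound on $P_{k+1}$ that is markedly sharper than the sunflower term alone.

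Given a sufficiently strong bound $P_{k+1} \leq \phi(d, w, k)$, the first-moment method combined with alteration yields a $k$-nearly orthogonal set of size $\Omega(N)$ whenever $\binom{N}{k+1}\phi \leq N/2$. Choosing $w = \Theta(k)$ and optimizing, I expect $N \geq (d/k^2)^{\Omega(k)}$ for $d \geq k^2$, which exceeds $d^{\delta k/\log k}$ for a suitable absolute $\delta > 0$; together with the elementary bound in the smaller-$d$ regime this gives the theorem. The principal obstacle, and what I anticipate will be the analytic heart of the argument, is showing that the alternating cancellation in the character sum above genuinely beats the sunflower floor---a delicate estimate driven by the parity structure on $\Fset_2^{k+1}$ and by the low-rank quadratic forms $Q_S$. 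A secondary issue is to ensure that the resulting sample is distinct enough (i.e.\ that $\binom{d}{w}$ comfortably exceeds the chosen $N$), but this is a mild side constraint once $w$ is set to $\Theta(k)$ and $d \geq k^2$.
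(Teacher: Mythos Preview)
Your plan has a structural obstruction that no amount of character-sum analysis can remove. The alternating expansion you write down is an \emph{exact} formula for $P_{k+1}$, so its value is at least the probability of any particular event implying pairwise non-orthogonality. The sunflower you yourself identify---all $k+1$ vectors contain a fixed coordinate and have pairwise disjoint remaining supports---already forces $P_{k+1}\ge c\,(w/d)^{k+1}$ once $d\gg kw$, and hence the alteration inequality $\binom{N}{k+1}P_{k+1}\le N/2$ yields at best $N=O(d)$ when $w=\Theta(k)$. Your target $N\ge (d/k^{2})^{\Omega(k)}$ would require $P_{k+1}\le k!\,(k^{2}/d)^{\Omega(k^{2})}$, which contradicts the sunflower lower bound already for, say, $d\ge k^{3}$. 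Thus the ``principal obstacle'' you flag is not merely delicate but impossible: the cancellation in your sum is precisely what makes it evaluate to $P_{k+1}$, and it cannot push the answer below the probability of a configuration that genuinely occurs. First-moment-plus-alteration on any product distribution in $\Fset_2^{d}$ is capped near the clique number of the non-orthogonality graph on your sample space, and that clique number is at least linear in $d/w$.

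The paper's argument is built on two ingredients absent from your plan. First, a linear-algebra lemma: any set of pairwise non-orthogonal odd-weight vectors in $\Fset_2^{t}$ lies in some subspace of dimension at most $(t+1)/2$, so it is covered by one of at most $2^{t^{2}}$ sets of size at most $2^{(t+1)/2}$. Second, the randomized graph product: one samples not from $\Fset_2^{d}$ but from $Q=V^{\otimes m}$, where $V$ is the set of odd-weight vectors of $\Fset_2^{t}$ with $t\approx k/8$ and $m$ maximal with $t^{m}\le d$. Any pairwise non-orthogonal subset of $Q$ is then trapped in a box $C_{1}\otimes\cdots\otimes C_{m}$ with each $C_{j}$ one of the $2^{t^{2}}$ subspaces, so the union bound runs over at most $2^{mt^{2}}$ boxes rather than over $(k+1)$-tuples. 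Because $t=\Theta(k)$, this union bound succeeds with $n\approx 2^{mt/4}\approx d^{\,\Omega(k/\log k)}$. The whole point is that the subspace-cover bound is only effective when the ambient dimension is $\Theta(k)$; the tensor product is exactly the device that transports it up to dimension $d=t^{m}$. A direct first-moment argument in $\Fset_2^{d}$, with or without a weight restriction, cannot replace this mechanism.
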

\noindent
The bound provided by Theorem~\ref{thm:F2} gets close to the upper bound given in~\eqref{eq:Ramsey} and is tight up to the $\log k$ term in the exponent.
Note that the assumption that $d$ is sufficiently large compared to $k$ is essential, because otherwise the bound guaranteed by the theorem might exceed the total number of vectors in $\Fset_2^d$.
It is interesting to compare Theorem~\ref{thm:F2} to the best known lower bound on $\Alpha{d}{k}{\Fset}$ where $\Fset$ is the real field $\R$, namely, the bound $d^{\Omega( \log k/ \log \log k)}$ achieved in~\cite{AlonS99}.

We proceed with an extension of Theorem~\ref{thm:F2} to general finite fields (in fact, fields with finite characteristic), stated as follows.

\begin{theorem}\label{thm:GeneralFields}
For every prime $p$ there exists a constant $\delta = \delta(p) >0$, such that for every field $\Fset$ of characteristic $p$ and for all integers $k \geq 2$ and $d \geq k^{1/(p-1)}$, it holds that
\[ \Alpha{d}{k}{\Fset} \geq d^{\delta \cdot k^{1/(p-1)}/\log k}.\]
\end{theorem}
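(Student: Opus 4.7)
I would prove Theorem~\ref{thm:GeneralFields} by combining a characteristic-$p$ analog of the construction behind Theorem~\ref{thm:F2} with a $(p-1)$-fold tensor power amplification; this combination naturally produces the $k^{1/(p-1)}$ factor in the exponent. Since $\Fset_p\subseteq\Fset$, it suffices to work with $\{0,1\}$-valued vectors, which correspond to subsets of $[d]$ whose $\Fset$-inner product equals the intersection size of the supports reduced modulo $p$. The problem becomes combinatorial: build many subsets of $[d]$ with sizes nonzero modulo $p$ such that among every $k+1$ of them some pair intersects in a multiple of $p$.

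For the base case, I would adapt the probabilistic and spectral argument of Theorem~\ref{thm:F2} to characteristic $p$ so as to obtain, for every $k_0\geq 2$ and $d_0\geq k_0$, a $k_0$-nearly orthogonal subset of $\Fset^{d_0}$ of size at least $d_0^{\delta_0\cdot k_0/\log k_0}$, for some $\delta_0=\delta_0(p)>0$. The construction samples random sparse $\{0,1\}$-vectors of fixed weight $s\not\equiv 0\pmod p$, so that non-self-orthogonality is automatic, and then a spectral bound on the clique number of the non-orthogonality graph (edges: nonzero intersection modulo $p$) controls the $k_0$-near orthogonality. The probability that two random size-$s$ vectors have nonzero inner product modulo $p$ is at most $\approx s^2/d_0$, matching the binary analysis up to $p$-dependent constants.

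Given the base case, I amplify using the multiplicativity of inner products under tensor products. Starting from $\calG_0\subseteq\Fset^{d_0}$ as above, set
\[
\calG:=\bigl\{v^{(1)}\otimes\cdots\otimes v^{(p-1)}:v^{(i)}\in\calG_0\bigr\}\subseteq\Fset^{d_0^{p-1}},
\]
so that $|\calG|=|\calG_0|^{p-1}$. By the identity
\[
\la u^{(1)}\otimes\cdots\otimes u^{(p-1)},v^{(1)}\otimes\cdots\otimes v^{(p-1)}\ra=\prod_{i=1}^{p-1}\la u^{(i)},v^{(i)}\ra,
\]
each element of $\calG$ is non-self-orthogonal, and a pairwise non-orthogonal clique in $\calG$ can take at most $k_0$ distinct values in any one tensor slot (since those distinct values themselves form a pairwise non-orthogonal clique in $\calG_0$), bounding the clique size by $k_0^{p-1}$. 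Choosing $k_0:=\lceil k^{1/(p-1)}\rceil$ and $d_0:=\lfloor d^{1/(p-1)}\rfloor$ and embedding $\calG$ in $\Fset^d$ by zero-padding yields a $k$-nearly orthogonal set of size at least $d^{\delta(p)\cdot k^{1/(p-1)}/\log k}$, where $\delta(p)=\Theta\bigl(\delta_0\cdot(p-1)\bigr)$ after using $\log k_0=\Theta(\log k/(p-1))$.

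The main obstacle is the base case. Adapting the spectral bound underlying Theorem~\ref{thm:F2} to mod-$p$ arithmetic requires reworking the relevant eigenvalue estimates, accommodating the different distribution and support of the non-orthogonality indicator; I expect this to succeed at the cost of $p$-dependent constants, but it is where the bulk of the technical work lies. A secondary issue is the parameter range: the tensor setup above requires $d\geq k$ in order that the choice $d_0=\lfloor d^{1/(p-1)}\rfloor$ satisfy $d_0\geq k_0$, whereas the theorem posits only $d\geq k^{1/(p-1)}$. I would close this gap either by adjusting the number of tensor factors (and rebalancing $d_0,k_0$) in the intermediate regime $k^{1/(p-1)}\leq d<k$, or by invoking the base case directly with a carefully chosen $d_0$ tailored to that regime.
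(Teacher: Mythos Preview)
Your plan has a fundamental gap in the base case, and the overall logic is inverted.

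You claim a base case giving, for every prime $p$, a $k_0$-nearly orthogonal set in $\Fset^{d_0}$ of size $d_0^{\delta_0 k_0/\log k_0}$. But this is precisely what the paper does \emph{not} know how to do for $p>2$. The binary argument rests on the structural fact (Lemma~\ref{lemma:collectionC_F2}) that every pairwise non-orthogonal set in $\Fset_2^t$ lies in a subspace of dimension at most $(t+1)/2$, yielding only $2^{t^2}$ covering boxes; for $p>2$ this is false---the paper exhibits the explicit counterexample $\{e_1,e_1+e_2,\ldots,e_1+e_t\}$---so the covering/union-bound machinery collapses. The paper's workaround is the map $g(v)=(v^{\otimes(p-1)},1,\ldots,1)$ of Lemma~\ref{lemma:collectionC}, which via Fermat's little theorem converts non-orthogonality back into orthogonality, but in ambient dimension $t^{p-1}+p-1$. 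Consequently the box count becomes $p^{O(t^p)}$ rather than $p^{O(t^2)}$, forcing $t\approx k^{1/(p-1)}$, and \emph{this} dimension blowup---not any tensor amplification---is the source of the $k^{1/(p-1)}$ in the exponent. Your sketch (``sparse random $\{0,1\}$-vectors\ldots spectral bound on the clique number\ldots'') does not address this obstacle: a spectral bound on clique \emph{size} (which the paper also has, Theorem~\ref{thm:vectors_B,C}) does not by itself produce a small collection of covering boxes, and without that the union bound cannot sustain $n=d_0^{\Omega(k_0/\log k_0)}$.

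Moreover, the architecture is self-defeating: if your base case were actually available, you would apply it directly with $d_0=d$ and $k_0=k$ and obtain $\alpha(d,k,\Fset)\ge d^{\Omega(k/\log k)}$, which is strictly stronger than Theorem~\ref{thm:GeneralFields}. Your $(p-1)$-fold tensor step only passes from a $k_0$-nearly orthogonal set to a $k_0^{p-1}$-nearly orthogonal one of the same cardinality exponent, i.e., it can only weaken the bound. So that step cannot be the mechanism producing the $k^{1/(p-1)}$ loss; it is papering over the fact that the base case you need has not been established.
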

\noindent
Note that the special case of Theorem~\ref{thm:GeneralFields} with $p=2$ coincides with Theorem~\ref{thm:F2}.

In fact, inspired by a recent paper of Balla~\cite{Balla23}, we provide a bipartite analogue of Theorem~\ref{thm:GeneralFields} with the same asymptotic bound.
The bipartite setting concerns sets $\calG$ of non-self-orthogonal vectors of $\Fset^d$, such that for every two subsets $G_1,G_2 \subseteq \calG$ of size $k$ each, some vector of $G_1$ is orthogonal to some vector of $G_2$ (see Theorem~\ref{thm:GeneralFieldsBipartite}). Notice that such a set $\calG$ is $(k-1)$-nearly orthogonal, hence our result for the bipartite setting strengthens Theorem~\ref{thm:GeneralFields}.
Our bipartite analogue of Theorem~\ref{thm:GeneralFields} is motivated by questions on the dimension of orthogonal representations over finite fields of $H$-free graphs, where $H$ is the complete bipartite graph $K_{k,k}$.
We describe these questions and related work in Section~\ref{sec:related}. The applications of our results to this context are presented in Section~\ref{sec:OD}.
Among other things, we determine up to a multiplicative constant the largest possible ratio between the clique cover number and the minimum dimension of an orthogonal representation over the binary field for graphs on $n$ vertices (see Theorem~\ref{thm:CHIvsOD} and Remark~\ref{remark:CHI}).

The proofs of Theorems~\ref{thm:F2} and~\ref{thm:GeneralFields} borrow the probabilistic approach of Alon and Szegedy~\cite{AlonS99} for producing $k$-nearly orthogonal sets over $\R$, which relies on the randomized graph product technique developed by Berman and Schnitger~\cite{BermanS92} and by Feige~\cite{Feige97}. However, in order to use this approach in the finite field setting and to establish significantly larger $k$-nearly orthogonal sets, we combine it with several additional ideas. For Theorem~\ref{thm:F2}, we combine the approach of~\cite{AlonS99} with elementary linear-algebraic arguments.
For Theorem~\ref{thm:GeneralFields} and for its bipartite analogue, we further use spectral techniques to prove pseudo-random properties of a related graph family. The analysis involves a result of Le Anh Vinh~\cite{Vinh08a}, which builds on a result of Alon and Krivelevich~\cite{AlonK97}.
As a by-product of our proof technique, we obtain an essentially tight estimation for the number of sets of non-self-orthogonal vectors that are pairwise non-orthogonal in vector spaces over prime order fields (see Theorem~\ref{thm:counting}).

\subsection{Proof Overview}

We offer here a high-level description of the proofs of Theorems~\ref{thm:F2} and~\ref{thm:GeneralFields}.
We start by presenting the approach of Alon and Szegedy~\cite{AlonS99} in their proof for the existence of large $k$-nearly orthogonal sets over $\R$, and then explain how we adapt it to the finite field setting and how we establish significantly larger sets.

For some integers $t$ and $m$, let $V$ denote the set of all $2^t$ real vectors of length $t$ whose values are in $\{-1,+1\}$, and let $Q$ denote the set of all vectors of the form $v_1 \otimes \cdots \otimes v_m$ for vectors $v_1, \ldots, v_m$ of $V$, where $\otimes$ stands for the tensor product operation on vectors (for the definition and properties of this operation, see Section~\ref{sec:preliminaries}). Note that the vectors of $Q$ are of length $t^m$. For some integer $n$, let $\calG$ be a random set of $n$ vectors chosen uniformly and independently from $Q$.
The argument proceeds by estimating the probability that $\calG$ is $k$-nearly orthogonal.

The probabilistic analysis in~\cite{AlonS99} uses a result of Frankl and R{\"o}dl~\cite{FranklR87}, which asserts that for some $\epsilon >0$, every subset of $V$ whose vectors are pairwise non-orthogonal has size at most $2^{(1-\eps) \cdot t}$.
A key observation, which follows by standard properties of the tensor product operation, is that every subset of $Q$ whose vectors are pairwise non-orthogonal is contained in a set of the form
\begin{eqnarray}\label{eq:box}
\{v_1 \otimes \cdots \otimes v_m \mid v_j \in A_j \mbox{~for all~} j \in [m]\}
\end{eqnarray}
for sets $A_1, \ldots, A_m \subseteq V$, where the vectors of each $A_j$ are pairwise non-orthogonal.
Referring to a set of the form~\eqref{eq:box} as a box, it suffices to show that with positive probability, no box contains $k+1$ of the random members of $\calG$.
The above result of~\cite{FranklR87} implies that the size of every box is at most $2^{m \cdot (1-\eps) \cdot t}$, hence the probability that a random vector chosen from $Q$ lies in a fixed box does not exceed $2^{m \cdot (1-\eps) \cdot t}/|Q| \approx 2^{-\eps \cdot m \cdot t}$. By applying the union bound, first over all choices of $k+1$ members of $\calG$ for a fixed box, and then over all boxes, one may obtain an upper bound on the probability that $\calG$ includes $k+1$ vectors that lie in a common box and thus on the probability that $\calG$ is not $k$-nearly orthogonal.
Note that the number of boxes does not exceed the number of $m$-tuples of subsets of $V$, hence it is at most $2^{m \cdot 2^t}$.
It turns out that for every integer $k \geq 3$, it is possible to choose the values of $t$ and $m$ such that for $d=t^m$ and $n = d^{\Omega(\log k / \log \log k)}$, the set $\calG \subseteq \R^d$ is $k$-nearly orthogonal with positive probability. This yields the existence of the desired $k$-nearly orthogonal sets over $\R$ and completes the description of the argument of~\cite{AlonS99}.

We next describe our proof of Theorem~\ref{thm:F2}, which asserts the existence of large $k$-nearly orthogonal sets over the binary field $\Fset_2$.
Following the approach of~\cite{AlonS99}, for integers $t$ and $m$, we consider the set $V$ of all non-self-orthogonal vectors in $\Fset_2^t$ and the set $Q$ of all the vectors in $\Fset_2^{t^m}$ that can be represented as a tensor product of $m$ vectors of $V$. Note that the vectors of $Q$ are non-self-orthogonal. As before, we let $\calG$ be a random set of $n$ vectors chosen uniformly and independently from $Q$, and we estimate the probability that $\calG$ is $k$-nearly orthogonal.

The first question to ask here is how large can be a subset of $V$ whose vectors are pairwise non-orthogonal.
A simple linear-algebraic argument shows that its size cannot exceed $2^{(t+1)/2}$. Indeed, consider a set of pairwise non-orthogonal vectors of $V$, and add a $1$ entry at the end of each of them. This gives us a set of self-orthogonal vectors of $\Fset_2^{t+1}$ that are pairwise orthogonal. It follows that the subspace of $\Fset_2^{t+1}$ spanned by those vectors is contained in its orthogonal complement, and as such, its dimension cannot exceed $(t+1)/2$. This yields the desired bound of $2^{(t+1)/2}$ on the size of the given set. In fact, this argument not only provides a bound on the size of any set of pairwise non-orthogonal vectors of $V$, but also provides valuable information about its structure. Namely, every such set is contained in some subspace of $\Fset_2^t$ of dimension at most $(t+1)/2$, obtained by omitting the last coordinate of the subspace of $\Fset_2^{t+1}$ considered above (see Lemma~\ref{lemma:collectionC_F2}).

As in the case of the real field, one may observe that every subset of $Q$ whose vectors are pairwise non-orthogonal is contained in a box defined as in~\eqref{eq:box}.
Moreover, the size of every such box is bounded by $2^{m \cdot (t+1)/2}$, hence the probability that a random vector chosen from $Q$ lies in a fixed box is bounded by $2^{m \cdot (t+1)/2}/|Q| \approx 2^{-m \cdot t/2}$.
By applying the union bound, first over all choices of $k+1$ members of $\calG$ for a fixed box, and then over all boxes, it is possible to bound the probability that some $k+1$ vectors of $\calG$ lie in a common box and thus to bound the probability that $\calG$ is not $k$-nearly orthogonal.
The crucial point that allows us to obtain significantly larger sets over $\Fset_2$ than over the reals is a stronger bound on the number of needed boxes.
As explained above, every set of pairwise non-orthogonal vectors of $V$ is contained in a subspace of $\Fset_2^t$ of dimension bounded away from $t$. The number of such subspaces clearly does not exceed $2^{t^2}$, hence the number of boxes that the union bound should take care of is at most $2^{m \cdot t^2}$ (in comparison to the $2^{m \cdot 2^t}$ boxes over $\R$). This allows us to show that for every integer $k \geq 2$, it is possible to choose the values of $t$ and $m$ such that for $d=t^m$ and $n = d^{\Omega(k / \log k)}$, the set $\calG \subseteq \Fset_2^d$ is $k$-nearly orthogonal with positive probability.

We next discuss the proof of Theorem~\ref{thm:GeneralFields} for finite fields of prime order.
First, note that the argument described above does not extend in a straightforward manner to the field $\Fset_p$ for a prime $p>2$.
Specifically, it no longer holds that every set of non-self-orthogonal vectors of $\Fset_p^t$ that are pairwise non-orthogonal is contained in a subspace of dimension bounded away from $t$. To see this, consider the set $\{e_1, e_1+e_2, e_1+e_3,\ldots, e_1+e_t\}$, where $e_i$ stands for the $i$th vector of the standard basis of $\Fset_p^t$.
For every prime $p > 2$, the vectors of this set are non-self-orthogonal and are pairwise non-orthogonal, and yet, they span the entire space $\Fset_p^t$.
Nevertheless, we show that similarly to the binary case, the size of a set of non-self-orthogonal vectors of $\Fset_p^t$ that are pairwise non-orthogonal is bounded from above by roughly $p^{t/2}$ (see Theorem~\ref{thm:vectors_B,C}). This result is established using a spectral argument. Namely, we consider a suitable family of graphs and derive their pseudo-random properties using the second largest eigenvalue determined in~\cite{Vinh08a} (see Section~\ref{sec:spectral}).

However, in order to apply the probabilistic argument with a relatively small collection of boxes, it is not sufficient to bound the size of sets of non-self-orthogonal vectors of $\Fset_p^t$ that are pairwise non-orthogonal. We further need some understanding of the structure of those sets. Specifically, one has to come up with a relatively small collection of relatively small subsets of $\Fset_p^t$, such that every set of non-self-orthogonal vectors of $\Fset_p^t$ that are pairwise non-orthogonal is contained in one of them.
To do so, we show that for some integer $s = O(t^{p-1})$, there exists a function $g: \Fset_p^t \rightarrow \Fset_p^s$ satisfying that for all $v_1,v_2 \in \Fset_p^t$, the vectors $v_1$ and $v_2$ are orthogonal if and only if their images $g(v_1)$ and $g(v_2)$ are not. We use this function to produce the required collection, and it turns out that its size is bounded by the number of subspaces of $\Fset_p^s$ that are spanned by vectors of the image of $g$, and thus by $(p^{t})^s = p^{O(t^{p})}$ (see Lemma~\ref{lemma:collectionC}).
Then, the number of boxes on which the union bound has to be applied is $p^{O(m \cdot t^{p})}$, resulting in the $k$-nearly orthogonal sets declared in Theorem~\ref{thm:GeneralFields}.

\subsection{Related Work}\label{sec:related}

This paper is concerned with the problem of determining the largest possible size $\Alpha{d}{k}{\Fset}$ of a $k$-nearly orthogonal subset of $\Fset^d$ for a field $\Fset$ and integers $d$ and $k$. In addition to the requirement that every $k+1$ members of such a set include an orthogonal pair, we require the vectors of the set to be non-self-orthogonal. Over the reals, the latter condition simply means that the vectors are nonzero, but over finite fields, it significantly affects the problem. Without this requirement, the upper bound~\eqref{eq:Ramsey} which relies on Ramsey theory does not hold, and the size of the sets may grow exponentially in $d$ even for $k=1$. The maximum size of a set of pairwise orthogonal vectors, which are not necessarily non-self-orthogonal, was determined precisely in the late sixties for the binary field by Berlekamp~\cite{Berlekamp69} and for any prime order field by Zame~\cite{Zame70}. Further extensions to general bilinear forms were recently provided by Mohammadi and Petridis~\cite{MohammadiP22}.

Another variant of the problem asks, for a given integer $\ell$, to find large sets of non-self-orthogonal vectors of $\Fset^d$ such that every $k+1$ of them include $\ell+1$ pairwise orthogonal vectors. While the present paper focuses on the case $\ell = 1$, the work of Alon and Szegedy~\cite{AlonS99} does consider this general setting over the reals and shows that for every integer $\ell \geq 1$ there exists a constant $\delta = \delta(\ell) > 0$, for which there exists a set of at least $d^{\delta \cdot \log k / \log \log k}$ vectors of $\R^d$ satisfying the above property. The challenge of extending our results to this setting is left for future research.

The quantities $\Alpha{d}{k}{\Fset}$ can be represented in terms of orthogonal representations of graphs, a notion that was proposed in the seminal paper of Lov{\'{a}}sz~\cite{Lovasz79} that introduced the celebrated $\vartheta$-function. A $d$-dimensional orthogonal representation of a graph over a field $\Fset$ is an assignment of a non-self-orthogonal vector of $\Fset^d$ to each vertex, such that the vectors assigned to two distinct non-adjacent vertices are orthogonal. As mentioned earlier, one may associate with every set $\calG$ of non-self-orthogonal vectors of $\Fset^d$, a graph on the vertex set $\calG$ with edges connecting pairs of non-orthogonal vectors.
Notice that such a $\calG$ is $k$-nearly orthogonal if and only if the graph associated with it is $K_{k+1}$-free, i.e., contains no copy of the complete graph on $k+1$ vertices. Therefore, $\Alpha{d}{k}{\Fset}$ is closely related to the largest possible number of vertices in a $K_{k+1}$-free graph that admits a $d$-dimensional orthogonal representation over $\Fset$ (with a minor difference between the quantities, caused by the fact that the vectors of an orthogonal representation are not necessarily distinct). The problem of determining the smallest $d=d(n)$ such that every $H$-free graph on $n$ vertices has a $d$-dimensional orthogonal representation over the reals was studied in the literature for various graphs $H$ with a particular attention to cycles (see, e.g.,~\cite{Pudlak02,BallaLS20}). Note that the bipartite setting studied in~\cite{Balla23} corresponds to the case where $H$ is a balanced complete bipartite graph. The applications of the results of the present paper to this context are given in Section~\ref{sec:OD}.

It is worth mentioning here the similar concept of $d$-dimensional orthogonal bi-representations of graphs over a field $\Fset$.
Here, one has to assign to each vertex a pair of non-orthogonal vectors of $\Fset^d$, such that the pairs $(u_1,u_2)$ and $(v_1,v_2)$ assigned to distinct non-adjacent vertices satisfy $\langle u_1, v_2 \rangle = \langle v_1, u_2 \rangle = 0$. Note that $d$-dimensional orthogonal representations can be viewed as a special case of $d$-dimensional orthogonal bi-representations by replacing every vector $v$ with the pair $(v,v)$. This notion was proposed by Peeters~\cite{Peeters96} to provide an alternative definition for the minrank parameter of graphs, introduced by Haemers~\cite{Haemers78} in the study of the Shannon capacity. While we do not mention here the original definition of the minrank of a graph over a field $\Fset$, it turns out that the latter is precisely the smallest integer $d$ for which the graph admits a $d$-dimensional orthogonal bi-representation over $\Fset$. This quantity of graphs has attracted considerable attention due to its various applications in computational complexity and in information theory. Notably, the question of determining the smallest possible minrank over a given field $\Fset$ over all $H$-free graphs on $n$ vertices has been explored for various graphs $H$ (see, e.g.,~\cite{CodenottiPR00,AlonHKPY11,Haviv18free,GolovnevH20}). The particular setting in which $\Fset$ is the binary field and $H$ is the triangle graph is the focus of an intensive recent line of work (see~\cite{BargZ22,BargSY22,HuangX23}).

\subsection{Outline}
The rest of the paper is organized as follows.
In Section~\ref{sec:preliminaries}, we provide some useful properties of the tensor product operation on vectors.
In Section~\ref{sec:F2}, we prove Theorem~\ref{thm:F2}. Its proof is elementary, and we present it separately from the proof of Theorem~\ref{thm:GeneralFields} for didactic reasons.
Then, in Section~\ref{sec:F}, we give some background on spectral graph theory and use it to extend Theorem~\ref{thm:F2} to general finite fields and to the bipartite setting, and in particular, to establish Theorem~\ref{thm:GeneralFields}.
We also present there applications of our results to the context of orthogonal representations of graphs.

\section{Preliminaries}\label{sec:preliminaries}

Our proofs crucially use the tensor product operation on vectors.
For a field $\Fset$ and integers $t_1,t_2$, the tensor product $w = u \otimes v$ of two vectors $u \in \Fset^{t_1}$ and $v \in \Fset^{t_2}$ is defined as the vector in $\Fset^{t_1 \cdot t_2}$, whose coordinates are indexed by the pairs $(i_1,i_2)$ with $i_1 \in [t_1]$ and $i_2 \in [t_2]$ (ordered lexicographically), defined by $w_{(i_1,i_2)} = u_{i_1} \cdot v_{i_2}$. Note that the representation of $w$ as a tensor product of two vectors of lengths $t_1$ and $t_2$ is not necessarily unique. Note further that for integers $t$ and $m$ and for given vectors $v_1, \ldots, v_m \in \Fset^t$, the vector $v_1 \otimes \cdots \otimes v_m$ lies in $\Fset^{t^m}$ and consists of all the $t^m$ possible products of $m$ values, one taken from each vector $v_j$ with $j \in [m]$. When all the vectors $v_1, \ldots, v_m$ are equal to a single vector $v$, their tensor product $v_1 \otimes \cdots \otimes v_m$ can be written as $v^{\otimes m}$.

It is well known and easy to verify that for vectors $u_1, \ldots, u_m \in \Fset^t$
and $v_1, \ldots, v_m \in \Fset^t$, the two vectors $u = u_1 \otimes \cdots \otimes u_m$ and $v = v_1 \otimes \cdots \otimes v_m$ satisfy
\begin{eqnarray}\label{eq:tensor}
\langle u, v \rangle = \prod_{j=1}^{m}{\langle u_j , v_j \rangle}.
\end{eqnarray}
It thus follows that $u$ and $v$ are orthogonal if and only if $u_j$ and $v_j$ are orthogonal for some $j \in [m]$. In particular, the vector $u$ is self-orthogonal if and only if $u_j$ is self-orthogonal for some $j \in [m]$.

For $m$ sets $A_1, \ldots, A_m \subseteq \Fset^t$, we let $A_1 \otimes \cdots \otimes A_m$ denote the collection of all vectors of the form $v_1 \otimes \cdots \otimes v_m$ with $v_j \in A_j$ for each $j \in [m]$.
While this notation will be convenient for us throughout this paper, let us stress that it does not coincide with the usual tensor product operation on linear subspaces.
As before, when all the sets $A_1, \ldots, A_m$ are equal to a single set $A$, the collection $A_1 \otimes \cdots \otimes A_m$ can be written as $A^{\otimes m}$.
For a set $G \subseteq \Fset^{t^m}$ whose vectors are represented as tensor products of $m$ vectors of $\Fset^t$, the $j$th projection of $G$ is the set of all vectors $v \in \Fset^t$ for which there exists a member $v_1 \otimes \cdots \otimes v_m$ of $G$ with $v_j = v$.

We will need the following simple claim.
\begin{claim}\label{claim:|V|^m}
For a field $\Fset$ and two integers $t$ and $m$, let $A_1, \ldots, A_m$ be sets of nonzero vectors of $\Fset^t$ such that the first nonzero value in each vector is $1$.
Then, $|A_1 \otimes \cdots \otimes A_m| = \prod_{j=1}^{m}{|A_j|}$.
\end{claim}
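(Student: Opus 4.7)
The plan is to reduce the identity to injectivity of the tensor-product map $\phi\colon A_1\times\cdots\times A_m\to \Fset^{t^m}$ sending $(v_1,\ldots,v_m)$ to $v_1\otimes\cdots\otimes v_m$. By definition, the set $A_1\otimes\cdots\otimes A_m$ is exactly the image of $\phi$, which already gives the inequality $|A_1\otimes\cdots\otimes A_m|\le\prod_{j=1}^m|A_j|$. So the whole content of the claim is that $\phi$ is one-to-one under the normalization hypothesis. I would prove this by induction on $m$, with the base case $m=1$ being trivial.

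The key preliminary observation is that if $v_1,\ldots,v_m\in\Fset^t$ are all nonzero and each has first nonzero entry equal to $1$, then the tensor product $v_1\otimes\cdots\otimes v_m$ also has this property: if $i_j^\ast$ denotes the position of the first nonzero coordinate of $v_j$, then by~\eqref{eq:tensor}-style direct computation the entry of $v_1\otimes\cdots\otimes v_m$ at the lexicographic index $(i_1^\ast,\ldots,i_m^\ast)$ equals $\prod_j v_j[i_j^\ast]=1$, and any lexicographically earlier index $(j_1,\ldots,j_m)$ must have some coordinate $j_k<i_k^\ast$ at the first position where it differs, forcing $v_k[j_k]=0$ and hence the product to vanish. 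So the normalization is preserved under tensoring, which is what lets the induction go through.

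For the inductive step, given the identity $v_1\otimes\cdots\otimes v_m=v_1'\otimes\cdots\otimes v_m'$ with $v_j,v_j'\in A_j$, set $u=v_2\otimes\cdots\otimes v_m$ and $u'=v_2'\otimes\cdots\otimes v_m'$. View the common vector $w=v_1\otimes u=v_1'\otimes u'$ as a $t\times t^{m-1}$ matrix $M$ whose $i$th row (indexed by the first coordinate block) is $v_1[i]\cdot u$ on the one hand and $v_1'[i]\cdot u'$ on the other. The row of $M$ at index $i_1^\ast$ (the first nonzero position of $v_1$) equals $u$, and by the preliminary observation $u$ has a $1$ in a known position, so this row is nonzero and in fact coincides with the first nonzero row of $M$; the same argument with primes extracts $u'$ from $M$ as its first nonzero row. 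Hence $u=u'$, and then reading off $v_1[i]$ (respectively $v_1'[i]$) as the scalar multiple of $u$ appearing in row $i$ yields $v_1=v_1'$. Applying the induction hypothesis to the sets $A_2,\ldots,A_m$ and the identity $u=u'$ gives $v_j=v_j'$ for $j\ge 2$, completing the proof.

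The only thing requiring genuine care is the lexicographic argument establishing that the tensor of normalized vectors is itself normalized; everything else is bookkeeping. I expect no substantial obstacle beyond keeping the multi-indexing clean.
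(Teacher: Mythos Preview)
Your proof is correct and essentially the same as the paper's: both reduce to injectivity of the tensor map and proceed by induction (the paper phrases it as iterating the $m=2$ case), using the key observation that the normalization ``first nonzero entry equals $1$'' is preserved under tensoring. Your extraction of $u$ as the first nonzero row of the $t\times t^{m-1}$ matrix is a slight variant of the paper's argument via the block at the first coordinate where $v_1$ and $v_1'$ differ, but the underlying idea is identical.
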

\begin{proof}
Fix a field $\Fset$.
We will prove that for all integers $t_1$ and $t_2$ and for all sets $A_1 \subseteq \Fset^{t_1}$ and $A_2 \subseteq \Fset^{t_2}$ of nonzero vectors whose first nonzero value is $1$, it holds that
\begin{enumerate}
  \item the vectors of $A_1 \otimes A_2$ are nonzero with first nonzero value $1$, and
  \item $|A_1 \otimes A_2| = |A_1| \cdot |A_2|$.
\end{enumerate}
This, applied iteratively $m-1$ times, completes the proof of the claim.

Let $A_1$ and $A_2$ be sets as above.
For the first item, consider a vector $v_1 \otimes v_2 \in A_1 \otimes A_2$ with $v_1 \in A_1$ and $v_2 \in A_2$.
Since $v_1$ and $v_2$ are nonzero, it follows that $v_1 \otimes v_2$ is nonzero as well. Additionally, the first nonzero value of $v_1 \otimes v_2$ is the product of the first nonzero values of $v_1$ and $v_2$, and is thus equal to $1$.

For the second item, it suffices to show that for all vectors $u_1, v_1 \in A_1$ and $u_2, v_2 \in A_2$, if $u_1 \otimes u_2 = v_1 \otimes v_2$ then $u_1 = v_1$ and $u_2 = v_2$.
So suppose that $u_1 \otimes u_2 = v_1 \otimes v_2$.
For contradiction, suppose further that $u_1 \neq v_1$, and consider the blocks of length $t_2$ in $u_1 \otimes u_2$ and $v_1 \otimes v_2$ that correspond to the first entry in which $u_1$ and $v_1$ differ. These blocks must be distinct, because $u_2$ and $v_2$ are nonzero vectors whose first nonzero value is $1$, which implies that one is not a multiple of the other by a field element different from $1$. This clearly contradicts the assumption $u_1 \otimes u_2 = v_1 \otimes v_2$.
Now, given that $u_1=v_1$, consider the blocks of length $t_2$ in $u_1 \otimes u_2$ and $v_1 \otimes v_2$ that correspond to the first nonzero value of $u_1$ (and $v_1$). Since those blocks are equal to $u_2$ and $v_2$ respectively, the assumption $u_1 \otimes u_2 = v_1 \otimes v_2$ implies that $u_2=v_2$, and we are done.
\end{proof}

\section{Nearly Orthogonal Sets over the Binary Field}\label{sec:F2}

In this section, we prove the existence of large $k$-nearly orthogonal sets over the binary field and confirm Theorem~\ref{thm:F2}.
We start by proving that for every integer $t$, there exists a relatively small collection of relatively small subsets of $\Fset_2^t$, such that every set of non-self-orthogonal vectors of $\Fset_2^t$ that are pairwise non-orthogonal is contained in one of them.

\begin{lemma}\label{lemma:collectionC_F2}
Let $t$ be an integer.
There exists a collection $\calC$ of subsets of $\Fset_2^t$ such that
\begin{enumerate}
  \item\label{itm:size2} $|\calC| \leq 2^{t^2}$,
  \item\label{itm:size_pairs2} for every $C \in \calC$, it holds that $|C| \leq 2^{(t+1)/2}$, and
  \item\label{itm:contain2} for every set $A \subseteq \Fset_2^t$ with $\langle v_1,v_2 \rangle \neq 0$ for all $v_1,v_2 \in A$, there exists a set $C \in \calC$ such that $A \subseteq C$.
\end{enumerate}
\end{lemma}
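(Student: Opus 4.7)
The plan is to follow the linear-algebraic sketch given in the proof overview: lift each candidate set $A$ into $\Fset_2^{t+1}$ so that the condition ``pairwise non-orthogonal and non-self-orthogonal over $\Fset_2$'' turns into ``totally isotropic,'' and then use the standard bound $\dim W \leq (\dim \text{ambient space})/2$ for a totally isotropic subspace.

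First I would take any set $A \subseteq \Fset_2^t$ satisfying $\langle v_1,v_2\rangle \neq 0$ for all $v_1,v_2 \in A$, which over $\Fset_2$ means $\langle v_1,v_2\rangle = 1$ in every case (including $v_1=v_2$). For each $v \in A$, define the lift $v' = (v,1) \in \Fset_2^{t+1}$. A one-line computation gives $\langle v',v'\rangle = \langle v,v\rangle + 1 = 0$ and $\langle u',v'\rangle = \langle u,v\rangle + 1 = 0$ for distinct $u,v \in A$. Let $W' = \linspan\{v' : v \in A\} \subseteq \Fset_2^{t+1}$. Bilinearity, together with self- and pairwise-orthogonality of the spanning set, implies $W' \subseteq (W')^{\perp}$, so $\dim W' \leq (t+1)/2$.

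Next I would let $W \subseteq \Fset_2^t$ be the image of $W'$ under the linear map that drops the last coordinate. Then $W$ is a subspace with $\dim W \leq \dim W' \leq (t+1)/2$, and $A \subseteq W$ because each $v \in A$ is the projection of $v' \in W'$. Take $\calC$ to be the collection of all subspaces of $\Fset_2^t$ of dimension at most $(t+1)/2$: item~\ref{itm:size_pairs2} is immediate since such a subspace has size at most $2^{(t+1)/2}$, and item~\ref{itm:contain2} is exactly what the previous paragraph established.

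The only remaining step is the count in item~\ref{itm:size2}. A subspace of $\Fset_2^t$ of dimension at most $t$ is determined by a spanning $t$-tuple of vectors (pad a basis with zeros if needed), so the number of subspaces of $\Fset_2^t$ is at most $(2^t)^t = 2^{t^2}$, which in particular bounds $|\calC|$. I do not expect any real obstacle here: the lifting trick is elementary and the totally-isotropic dimension bound is standard; the only mild point to verify is that the projection does not break the containment $A \subseteq W$, which is immediate because the added last coordinate is simply discarded.
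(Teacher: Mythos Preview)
Your proposal is correct and follows essentially the same approach as the paper: lift $A$ to $\{(v,1)\}\subseteq\Fset_2^{t+1}$, use $W'\subseteq (W')^{\perp}$ to get $\dim W'\le (t+1)/2$, project back, and bound the number of subspaces by $(2^t)^t=2^{t^2}$. The only cosmetic difference is that the paper takes $\calC$ to be the subspaces of dimension exactly $\lfloor (t+1)/2\rfloor$ while you take those of dimension at most $(t+1)/2$; both choices work and the counting and size bounds are identical.
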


\begin{proof}
Fix an integer $t$, and define $\calC$ as the collection of all subspaces of $\Fset_2^t$ of dimension $\lfloor \tfrac{t+1}{2} \rfloor$. We prove that $\calC$ satisfies the three properties required by the lemma.
Firstly, the number of subspaces of $\Fset_2^t$ clearly does not exceed the number of possible bases of such subspaces, hence it holds that $|\calC| \leq 2^{t^2}$, as required for Item~\ref{itm:size2} of the lemma.
Secondly, every subspace of $\Fset_2^t$ of dimension $\lfloor \tfrac{t+1}{2} \rfloor$ includes $2^{\lfloor (t+1)/2 \rfloor}$ vectors, implying that $\calC$ satisfies Item~\ref{itm:size_pairs2} of the lemma.

Finally, to prove that $\calC$ satisfies Item~\ref{itm:contain2} of the lemma, let $A$ be a subset of $\Fset_2^t$ with $\langle v_1,v_2 \rangle \neq 0$ for all $v_1,v_2 \in A$.
Consider the subset $A' = \{(v,1) \mid v \in A\}$ of $\Fset_2^{t+1}$ obtained by adding a $1$ entry at the end of each vector of $A$.
Observe that our assumption on $A$ implies that for every two (not necessarily distinct) vectors $u_1=(v_1,1)$ and $u_2=(v_2,1)$ of $A'$ with $v_1,v_2 \in A$, it holds that $\langle u_1,u_2 \rangle = \langle v_1, v_2 \rangle +1 = 0$.
Let $W = \linspan(A')$ denote the subspace of $\Fset_2^{t+1}$ spanned by the vectors of $A'$.
Since every two vectors of $A'$ are orthogonal, it follows that every two vectors of $W$ are orthogonal. Letting $W^\perp$ stand for the orthogonal complement of $W$, this implies that $W \subseteq W^\perp$, and thus $\dim(W) \leq \dim(W^\perp)$. By $\dim(W)+\dim(W^\perp)=t+1$, we derive that $\dim(W) \leq \lfloor \tfrac{t+1}{2} \rfloor$.
Now, let $\widetilde{W}$ denote the projection of $W$ on the first $t$ coordinates (omitting the last one). Observe that $\widetilde{W}$ forms a subspace of $\Fset_2^t$ of dimension at most $\lfloor \tfrac{t+1}{2} \rfloor$ and that $\widetilde{W} = \linspan(A)$, so in particular, $A \subseteq \widetilde{W}$. By the definition of $\calC$, this yields that there exists a set $C \in \calC$ such that $A \subseteq C$. This completes the proof.
\end{proof}

We are ready to prove Theorem~\ref{thm:F2}.

\begin{proof}[ of Theorem~\ref{thm:F2}]
For an integer $t$, let $V$ denote the set of all non-self-orthogonal vectors of $\Fset_2^t$, that is,
\[V = \{ v \in \Fset_2^t \mid \langle v , v \rangle =1\},\]
and notice that $|V| = 2^{t-1}$.
For an integer $m$, let $Q$ denote the set of all vectors obtained by applying the tensor product operation on $m$ vectors of $V$, that is, $Q = V^{\otimes m} \subseteq \Fset_2^{t^m}$.
Since the vectors of $V$ are nonzero, we can apply Claim~\ref{claim:|V|^m} to obtain that
\begin{eqnarray}\label{eq:|Q|_F2}
|Q| = |V|^m = 2^{m \cdot (t-1)}.
\end{eqnarray}
Note that the vectors of $Q$ are non-self-orthogonal, because every vector $v = v_1 \otimes \cdots \otimes v_m \in Q$ satisfies $\langle v, v \rangle = \prod_{j=1}^{m}{\langle v_j,v_j \rangle} = 1$, where the last equality holds because the vectors $v_1, \ldots, v_m$ are members of $V$.

For an integer $n$, let $\calZ = (z_1, \ldots, z_n)$ be a random sequence of $n$ vectors chosen uniformly and independently from $Q$ (repetitions allowed).
The vectors of $\calZ$ are clearly non-self-orthogonal, because the vectors of $Q$ are.
We will show that for a given integer $k$ and for an appropriate choice of the integers $t$, $m$, and $n$, it holds with positive probability that for every set $I \subseteq [n]$ of size $|I|=k+1$ the vectors of $\{z_i \mid i \in I\}$ include an orthogonal pair.
We start with some preparations.

By Lemma~\ref{lemma:collectionC_F2}, there exists a collection $\calC$ of subsets of $\Fset_2^t$ such that
\begin{enumerate}
  \item\label{itm:size2_T} $|\calC| \leq 2^{t^2}$,
  \item\label{itm:size_pairs2_T} for every $C \in \calC$, it holds that $|C| \leq 2^{(t+1)/2}$, and
  \item\label{itm:contain2_T} for every set $A \subseteq \Fset_2^t$ with $\langle v_1,v_2 \rangle \neq 0$ for all $v_1,v_2 \in A$, there exists a set $C \in \calC$ such that $A \subseteq C$.
\end{enumerate}
Consider the collection
\[ \calB = \{ C^{(1)} \otimes C^{(2)} \otimes \cdots \otimes C^{(m)} \mid C^{(j)} \in \calC \mbox{~for all~}j \in [m] \}.\]
It follows from Item~\ref{itm:size2_T} above that
\begin{eqnarray}\label{eq:|calB|_T_F2}
|\calB| \leq |\calC|^m \leq 2^{m \cdot t^2}.
\end{eqnarray}
It further follows from Item~\ref{itm:size_pairs2_T} that for every set $B = C^{(1)} \otimes C^{(2)} \otimes \cdots \otimes C^{(m)} \in \calB$, it holds that
\begin{eqnarray}\label{eq:|B|_F2}
|B| \leq \prod_{j=1}^{m}{|C^{(j)}|} \leq 2^{m \cdot (t+1)/2}.
\end{eqnarray}
We next claim that for every set $G \subseteq Q$ of pairwise non-orthogonal vectors, there exists a set $B \in \calB$ such that $G \subseteq B$.
To see this, consider such a set $G \subseteq Q$, and for each $j \in [m]$, let $A^{(j)} \subseteq \Fset_2^t$ denote the $j$th projection of $G$ (see Section~\ref{sec:preliminaries}).
Using the property of tensor product given in~\eqref{eq:tensor}, the fact that every two (not necessarily distinct) vectors of $G$ are not orthogonal implies that for each $j \in [m]$, every two (not necessarily distinct) vectors of $A^{(j)}$ are not orthogonal, hence by Item~\ref{itm:contain2_T}, there exists a set $C^{(j)} \in \calC$ such that $A^{(j)} \subseteq C^{(j)}$. This implies that $G \subseteq B$ for the set $B = C^{(1)} \otimes C^{(2)} \otimes \cdots \otimes C^{(m)} \in \calB$, as desired.

Now, for a given integer $k$, consider the event $\calE$ that there exists a set $I \subseteq [n]$ of size $|I|=k+1$ for which the vectors of $\{z_i \mid i \in I\}$ are pairwise non-orthogonal.
As shown above, such vectors lie in some set of the collection $\calB$.
For every fixed $B \in \calB$, we apply the union bound to obtain that the probability that there exists a set $I \subseteq [n]$ of size $|I|=k+1$ such that $\{z_i \mid i \in I\} \subseteq B$ is at most
\[ \binom{n}{k+1} \cdot \bigg ( \frac{|B|}{|Q|} \bigg )^{k+1} \leq \bigg ( \frac{n \cdot |B|}{|Q|} \bigg )^{k+1} \leq \bigg ( \frac{n \cdot 2^{m \cdot (t+1)/2}}{2^{m \cdot (t-1)}} \bigg )^{k+1} = \bigg ( \frac{n}{2^{m \cdot (t-3)/2}} \bigg )^{k+1},\]
where for the second inequality we have used~\eqref{eq:|Q|_F2} and~\eqref{eq:|B|_F2}.
We apply again the union bound, this time over all the sets of $\calB$, and use~\eqref{eq:|calB|_T_F2} to obtain that
\begin{eqnarray}\label{eq:ProbE_F2}
\Prob{}{\calE} \leq |\calB| \cdot \bigg ( \frac{n}{2^{m \cdot (t-3)/2}} \bigg )^{k+1} \leq 2^{m \cdot t^2} \cdot \bigg ( \frac{n}{2^{m \cdot (t-3)/2}} \bigg )^{k+1}.
\end{eqnarray}

We finally set the parameters of the construction, ensuring that the event $\calE$ occurs with probability smaller than $1$.
Let $d \geq k$ be two integers. Note that constant values of $k$ can be handled, using $\Alpha{d}{k}{{\Fset_2}} \geq d$, by an appropriate choice of the constant $\delta$ from the assertion of the theorem.
For a sufficiently large $k$, set $t = \lfloor k/8 \rfloor$, and let $m$ be the largest integer such that $d \geq t^m$. The assumption $d \geq k$ implies that $m \geq 1$. Set $n = \lfloor 2^{m \cdot t/4} \rfloor$.
We obtain that
\[ \Prob{}{\calE} \leq 2^{m \cdot t^2} \cdot \bigg ( \frac{1}{2^{m \cdot (t/4-3/2)}} \bigg )^{k+1} \leq 2^{m \cdot t^2} \cdot \bigg ( \frac{1}{2^{m \cdot t/8}} \bigg )^{k+1}< 1,\]
where the first inequality holds by combining~\eqref{eq:ProbE_F2} with our choice of $n$, the second by the assumption that $k$ and $t$ are sufficiently large (specifically, $t/4-3/2 \geq t/8$ for $t \geq 12$), and the third by our choice of $t$.
This implies that there exists a choice for the $n$ vectors of the sequence $\calZ$ for which the event $\calE$ does not occur, that is, no $k+1$ of them are pairwise non-orthogonal.
For this choice, let $\calG = \{ z_i \mid i \in [n] \} \subseteq \Fset_2^{t^m}$ denote the set that consists of the vectors of $\calZ$. Since the event $\calE$ does not occur, no vector appears in $\calZ$ more than $k$ times, hence $|\calG| \geq n/k$.
We thus obtain, using the monotonicity of $\Alpha{d}{k}{{\Fset_2}}$ with respect to $d$, that
\[ \Alpha{d}{k}{{\Fset_2}} \geq \Alpha{t^m}{k}{{\Fset_2}} \geq n/k \geq 2^{\Omega(m \cdot t)} \geq 2^{\Omega( (\log d) \cdot t/\log t)} \geq d^{\Omega(t/\log t)} \geq d^{\Omega(k/ \log k)}.\]
This completes the proof.
\end{proof}

\section{Nearly Orthogonal Sets over General Finite Fields}\label{sec:F}

In this section, we prove the existence of large $k$-nearly orthogonal sets over general finite fields and confirm Theorem~\ref{thm:GeneralFields}.
As mentioned earlier, we establish a bipartite analogue of the theorem, stated as follows.
\begin{theorem}\label{thm:GeneralFieldsBipartite}
For every prime $p$ there exists a constant $\delta = \delta(p) >0$, such that for every field $\Fset$ of characteristic $p$ and for all integers $k \geq 2$ and $d \geq k^{1/(p-1)}$, the following holds.
There exists a set $\calG$ of non-self-orthogonal vectors of $\Fset^d$ of size
\[ |\calG| \geq d^{\delta \cdot k^{1/(p-1)}/\log k},\]
such that for every two sets $G_1, G_2 \subseteq \calG$ with $|G_1|=|G_2|=k$, there exist vectors $v_1 \in G_1$ and $v_2 \in G_2$ with $\langle v_1,v_2 \rangle = 0$.
\end{theorem}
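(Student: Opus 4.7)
The plan is to run the probabilistic tensor-product argument of Theorem~\ref{thm:F2} in characteristic $p$, replacing the clique-based union bound by its bipartite counterpart and replacing the binary structural lemma (Lemma~\ref{lemma:collectionC_F2}) by a bipartite strengthening of Lemma~\ref{lemma:collectionC}. I would set $V = \{v \in \Fset^t : \langle v,v\rangle \neq 0\}$, form $Q = V^{\otimes m} \subseteq \Fset^{t^m}$ (whose members are non-self-orthogonal by~\eqref{eq:tensor}), and sample a sequence $\calZ = (z_1, \ldots, z_n)$ of $n$ vectors independently and uniformly from $Q$. The structural input I would need is a collection $\calC$ of subsets of $V$ with $|\calC| \leq p^{O(t^p)}$ and $|C| \leq p^{O(t/2)}$ for every $C \in \calC$, such that whenever $A_1, A_2 \subseteq V$ satisfy $\langle v_1, v_2 \rangle \neq 0$ for every $v_1 \in A_1$ and $v_2 \in A_2$, there exist $C_1, C_2 \in \calC$ with $A_i \subseteq C_i$. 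I would establish this bipartite covering via the orthogonality-flipping function $g : \Fset_p^t \to \Fset_p^s$ with $s = O(t^{p-1})$ from Lemma~\ref{lemma:collectionC}: the bipartite non-orthogonality of $A_1, A_2$ becomes the bipartite orthogonality of $g(A_1), g(A_2)$, so each $g(A_i)$ sits in the orthogonal complement of $\linspan(g(A_{3-i}))$ in $\Fset_p^s$, and both preimage sets belong to the family of $g$-preimages of subspaces spanned by at most $s$ vectors from the image of $g$, whose cardinality is bounded by $p^{O(t^p)}$ exactly as in Lemma~\ref{lemma:collectionC}.

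Next I would lift to the box collection $\calB = \{ C^{(1)} \otimes \cdots \otimes C^{(m)} : C^{(j)} \in \calC\}$ and use the tensor identity~\eqref{eq:tensor} to argue that if $G_1, G_2 \subseteq Q$ are bipartitely non-orthogonal, then each coordinate-wise pair of projections $A_1^{(j)}, A_2^{(j)} \subseteq V$ is also bipartitely non-orthogonal, so the bipartite lemma applied coordinate-wise yields $B_1, B_2 \in \calB$ with $G_i \subseteq B_i$. Let $\calE$ denote the event that some $B_1, B_2 \in \calB$ and index sets $I_1, I_2 \subseteq [n]$ of size $k$ satisfy $z_j \in B_i$ for every $j \in I_i$; the union bound yields
\begin{equation*}
\Prob{}{\calE} \leq |\calB|^2 \cdot \binom{n}{k}^2 \cdot \bigg(\frac{\max_{B \in \calB}|B|}{|Q|}\bigg)^{2k} \leq p^{O(m t^p)} \cdot n^{2k} \cdot p^{-\Omega(m t k)}.
\end{equation*}
I would set $t = c(p) \cdot k^{1/(p-1)}$ for a sufficiently small constant $c(p) > 0$, take $m$ maximal with $t^m \leq d$ (which is at least $1$ by the hypothesis $d \geq k^{1/(p-1)}$), and let $n = \lfloor p^{\Omega(m t)} \rfloor$; the identity $m t^p = c(p)^{p-1} \cdot m t k$ will make the box-counting factor absorbable and will force $\Prob{}{\calE} < 1$. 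Exactly as in the proof of Theorem~\ref{thm:F2}, once $\calE$ fails no vector of $Q$ can appear $k$ or more times in $\calZ$ (taking $I_1 = I_2$ to index such a repetition together with any box containing that vector would witness $\calE$), and the underlying set $\calG$ satisfies $|\calG| \geq n/k = p^{\Omega(mt)}/k = d^{\Omega(t/\log t)} = d^{\Omega(k^{1/(p-1)}/\log k)}$.

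The main obstacle I foresee is the bipartite upgrade of Lemma~\ref{lemma:collectionC}. In the non-bipartite version each $A \subseteq V$ under consideration is itself pairwise non-orthogonal, so $g(A)$ lies in a self-orthogonal subspace $W \subseteq W^\perp$ of $\Fset_p^s$, and the $p^{O(t/2)}$ bound on the resulting covering sets is a direct consequence of Theorem~\ref{thm:vectors_B,C}. In the bipartite case $A_1, A_2$ need not be internally pairwise non-orthogonal, so the natural candidate covering sets $g^{-1}(W^\perp) \cap V$ are themselves not pairwise non-orthogonal in $V$ and Theorem~\ref{thm:vectors_B,C} cannot be invoked directly. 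I expect the required $p^{O(t/2)}$ bound to follow instead from a bipartite form of the underlying pseudo-random argument, namely an expander-mixing-lemma estimate on the non-orthogonality graph over $\Fset_p^t$ whose spectrum is controlled by Vinh's eigenvalue result~\cite{Vinh08a} combined with the framework of Alon and Krivelevich~\cite{AlonK97}.
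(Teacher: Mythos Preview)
Your overall architecture matches the paper's, but there is a genuine gap in the structural input you assume. You ask for a collection $\calC$ of single subsets of $V$ with $|C| \leq p^{O(t/2)}$ for every $C \in \calC$, such that any bipartitely non-orthogonal pair $(A_1,A_2)$ is covered by some $C_1, C_2 \in \calC$. Such a collection cannot exist: take $A_1 = \{e_1\}$ and $A_2 = \{v \in \Fset_p^t : v_1 \neq 0\}$. These are bipartitely non-orthogonal, yet $|A_2| = (p-1)p^{t-1}$, so any $C_2 \supseteq A_2$ has size far exceeding $p^{O(t/2)}$. The expander-mixing-lemma argument you hope will save this is in fact already bipartite (Theorem~\ref{thm:vectors_B,C} bounds $|C_1|\cdot|C_2|$, not $\max(|C_1|,|C_2|)$), and spectral estimates of this kind necessarily yield product bounds rather than individual ones.

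The paper's fix is exactly to work with a collection of \emph{pairs} $(C_1,C_2)$ (this is what Lemma~\ref{lemma:collectionC} already provides; it is not a single-set lemma awaiting a bipartite upgrade) and to retain only the product bound $|C_1|\cdot|C_2| \leq p^{t+2}$. Tensoring $m$ such pairs coordinate-wise gives pairs $(B_1,B_2)$ with $|B_1|\cdot|B_2| \leq p^{m(t+2)}$, hence $\min(|B_1|,|B_2|) \leq p^{m(t+2)/2}$. The union bound is then run \emph{one-sidedly}: for each pair $(B_1,B_2)$, assume $|B_1| \leq |B_2|$ and bound only the probability that some $k$ of the sampled vectors land in $B_1$, which already certifies failure of the bipartite event. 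This yields $\Prob{}{\calE} \leq |\calB| \cdot (n \cdot p^{m(t+2)/2}/|Q|)^k$, with a single power of $k$ rather than your $2k$, and with $|B_1|$ controlled by the product bound rather than an unattainable individual bound. Once you make this change (and normalize $V$ so that Claim~\ref{claim:|V|^m} gives $|Q| = |V|^m$), the rest of your parameter setting goes through.
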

\noindent
Observe that the set $\calG$ provided by Theorem~\ref{thm:GeneralFieldsBipartite} is $(k-1)$-nearly orthogonal.
Therefore, Theorem~\ref{thm:GeneralFields} can be derived from Theorem~\ref{thm:GeneralFieldsBipartite}.

\subsection{Pseudo-random Graphs}\label{sec:spectral}

An $(n,d,\lambda)$-graph is a $d$-regular graph on $n$ vertices, such that the absolute values of all the eigenvalues of its adjacency matrix but the largest one are at most $\lambda$.
The graphs considered here may have loops, at most one at each vertex, contributing $1$ to the degree of the corresponding vertex.
It is well known that $(n,d,\lambda)$-graphs with $\lambda$ significantly smaller than $d$ satisfy various pseudo-random properties.
One property is given by the following theorem, which says that in such graphs, the number of edges connecting two sets of vertices is close to the expected number of edges between them in a random graph with edge probability $d/n$.

\begin{theorem}\label{thm:pseudo_B,C}
Let $G$ be an $(n,d,\lambda)$-graph (with loops allowed). Then, for every two sets of vertices $C_1$ and $C_2$ of $G$, the number $e(C_1,C_2)$ of pairs $(x_1,x_2)$ of adjacent vertices with $x_1 \in C_1$ and $x_2 \in C_2$ satisfies
\[ \Big | e(C_1,C_2) - \frac{d}{n} \cdot |C_1| \cdot |C_2| \Big | \leq \lambda \cdot \sqrt{|C_1| \cdot |C_2|}.\]
\end{theorem}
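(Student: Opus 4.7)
The plan is to prove the expander mixing lemma by interpreting $e(C_1,C_2)$ as a quadratic form in the adjacency matrix of $G$ and then bounding it via a spectral decomposition. Let $A$ be the $n \times n$ adjacency matrix of $G$, where $A_{x,x'}=1$ if $\{x,x'\}$ is an edge (or a loop, when $x=x'$) and $A_{x,x'}=0$ otherwise. Let $\mathbf{1}_{C_i}\in\mathbb{R}^n$ denote the characteristic vector of $C_i$ for $i\in\{1,2\}$. Then directly from the definitions,
\[
e(C_1,C_2)\;=\;\sum_{x_1\in C_1}\sum_{x_2\in C_2}A_{x_1,x_2}\;=\;\mathbf{1}_{C_1}^{\,T}A\,\mathbf{1}_{C_2},
\]
and note that loops are already accommodated correctly, since each loop at $x\in C_1\cap C_2$ contributes exactly $1$ to the sum.

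Next, I would use that $A$ is real symmetric and that $G$ is $d$-regular (with the convention that a loop adds $1$ to the degree), so the all-ones vector $\mathbf{1}$ is an eigenvector of $A$ with eigenvalue $d$. Pick an orthonormal eigenbasis $v_1,\ldots,v_n$ of $A$ with $v_1=\mathbf{1}/\sqrt{n}$ and corresponding eigenvalues $d=\mu_1,\mu_2,\ldots,\mu_n$, where by hypothesis $|\mu_i|\le\lambda$ for every $i\ge 2$. Expand
\[
\mathbf{1}_{C_1}=\sum_{i=1}^{n}\alpha_i v_i,\qquad \mathbf{1}_{C_2}=\sum_{i=1}^{n}\beta_i v_i,
\]
and observe that $\alpha_1=\langle\mathbf{1}_{C_1},v_1\rangle=|C_1|/\sqrt{n}$ and similarly $\beta_1=|C_2|/\sqrt{n}$, while $\sum_{i\ge 1}\alpha_i^{2}=\|\mathbf{1}_{C_1}\|^{2}=|C_1|$ and likewise $\sum_{i\ge 1}\beta_i^{2}=|C_2|$.

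Finally, I would compute
\[
e(C_1,C_2)\;=\;\mathbf{1}_{C_1}^{\,T}A\,\mathbf{1}_{C_2}\;=\;\sum_{i=1}^{n}\mu_i\,\alpha_i\beta_i\;=\;d\,\alpha_1\beta_1+\sum_{i=2}^{n}\mu_i\,\alpha_i\beta_i,
\]
identify the leading term as $d\alpha_1\beta_1=\tfrac{d}{n}|C_1||C_2|$, and bound the error term using $|\mu_i|\le\lambda$ together with Cauchy--Schwarz:
\[
\Big|\sum_{i=2}^{n}\mu_i\,\alpha_i\beta_i\Big|\;\le\;\lambda\sum_{i=2}^{n}|\alpha_i\beta_i|\;\le\;\lambda\sqrt{\sum_{i=2}^{n}\alpha_i^{2}}\,\sqrt{\sum_{i=2}^{n}\beta_i^{2}}\;\le\;\lambda\sqrt{|C_1|\cdot|C_2|}.
\]
Combining these gives the claimed inequality. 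This is the classical expander mixing lemma, so I do not anticipate a real obstacle; the only point that deserves a brief check is that allowing (at most one) loop per vertex does not disturb the symmetry of $A$ nor the fact that $\mathbf{1}$ is a $d$-eigenvector, both of which hold because a loop contributes $+1$ to the relevant diagonal entry and is counted once in the regularity.
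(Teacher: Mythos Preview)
Your proof is correct and is the standard direct proof of the expander mixing lemma via spectral decomposition of the bilinear form $\mathbf{1}_{C_1}^{T}A\,\mathbf{1}_{C_2}$. The paper takes a slightly different, though equally classical, route: it works only with the centered indicator vector $f=\mathbf{1}_{C_1}-\tfrac{|C_1|}{n}\mathbf{1}$, uses $\|Af\|^2\le\lambda^2\|f\|^2$ to obtain the one-sided variance bound $\sum_{v\in V}(|N_{C_1}(v)|-\tfrac{d}{n}|C_1|)^2\le\lambda^2|C_1|(1-|C_1|/n)$, and then applies Cauchy--Schwarz to $\sum_{v\in C_2}||N_{C_1}(v)|-\tfrac{d}{n}|C_1||$. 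Your approach is more symmetric in $C_1,C_2$ and arguably cleaner for the stated conclusion; the paper's approach has the minor advantage that its intermediate inequality (the $\ell^2$ bound on the deviations $|N_{C_1}(v)|-cd$) is itself a useful statement. Both arguments handle loops in the same way, via the observation that a single loop contributes $1$ to the diagonal of $A$ and to the degree, so symmetry of $A$ and $A\mathbf{1}=d\mathbf{1}$ are preserved.
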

\noindent
The proof of Theorem~\ref{thm:pseudo_B,C} is presented in~\cite[Chapter~9.2]{AlonS16} for simple graphs.
The same proof extends to the case where at most one loop is allowed at every vertex.
For completeness, we describe the proof in Appendix~\ref{app:spectral} and verify the extension stated above.

For a prime $p$ and an integer $t$, let $G(p,t)$ denote the graph whose vertices are all the nonzero vectors of $\Fset_p^t$, where two such (not necessarily distinct) vectors $v_1,v_2 \in \Fset_p^t$ are connected by an edge if and only if they are orthogonal, that is, $\langle v_1,v_2 \rangle = 0$.
The following theorem follows from a result of~\cite{Vinh08a} (see also~\cite{AlonK97}).

\begin{theorem}[\cite{Vinh08a}]\label{thm:Vinh}
For every prime $p$ and for every integer $t$, the graph $G(p,t)$ is a
\[(p^t-1,p^{t-1}-1,(p-1) \cdot p^{t/2-1}) \mbox{-graph}.\]
\end{theorem}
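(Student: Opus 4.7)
The plan is to pin down the three defining parameters separately. The vertex count is immediate, $|\Fset_p^t \setminus \{0\}| = p^t - 1$. For regularity, I observe that for every nonzero $v \in \Fset_p^t$, the hyperplane $v^\perp = \{u \in \Fset_p^t : \langle u,v\rangle = 0\}$ has exactly $p^{t-1}$ elements; after discarding the zero vector, $v$ has $p^{t-1} - 1$ neighbors, where a self-loop at $v$ (present precisely when $\langle v,v\rangle = 0$) is counted once toward the degree. The bulk of the work is the eigenvalue bound.

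For the spectral bound I will use characters. Writing $\omega = e^{2\pi i/p}$ and using $\mathbf{1}[\langle u,v\rangle = 0] = \tfrac{1}{p}\sum_{c \in \Fset_p}\omega^{c\langle u,v\rangle}$, I decompose the adjacency matrix as $pA = J + B$, where $J$ is the all-ones matrix and $B_{u,v} = \sum_{c \in \Fset_p^\ast}\omega^{c\langle u,v\rangle}$ is real and symmetric. Expanding $(B^2)_{u,w} = \sum_{v \neq 0}\sum_{c,c' \neq 0}\omega^{\langle cu-c'w,\,v\rangle}$ and applying the orthogonality relation $\sum_{v \neq 0}\omega^{\langle a,v\rangle} = p^t\mathbf{1}[a=0] - 1$ reduces the computation to counting pairs $(c,c') \in (\Fset_p^\ast)^2$ with $cu = c'w$: there are $p-1$ such pairs when $u$ and $w$ span the same $1$-dimensional subspace, and none otherwise. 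This yields
\[ B^2 \;=\; p^t (p-1)\, L \;-\; (p-1)^2\, J, \]
where $L$ is the block-diagonal matrix with one $(p-1)\times(p-1)$ all-ones block per line $[u] = \{\alpha u : \alpha \in \Fset_p^\ast\}$.

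From here the spectrum falls out. The matrices $L$ and $J$ commute, and I decompose $\mathbb{C}^{p^t - 1}$ into three joint invariant subspaces: the span of $\mathbf{1}$, the functions constant on each line and orthogonal to $\mathbf{1}$, and the functions summing to zero on every line (of dimensions $1$, $\tfrac{p^t-1}{p-1} - 1$, and $(p-2)\cdot\tfrac{p^t-1}{p-1}$, which sum to $p^t-1$). On these $L$ acts as $p-1,\,p-1,\,0$ and $J$ acts as $p^t-1,\,0,\,0$, so via the identity above $B^2$ has eigenvalues $(p-1)^2$, $p^t(p-1)^2$, and $0$ respectively. A direct character-sum computation gives $B\mathbf{1} = -(p-1)\mathbf{1}$, fixing the sign on the first subspace, so $A\mathbf{1} = \tfrac{1}{p}(J+B)\mathbf{1} = (p^{t-1}-1)\mathbf{1}$, recovering the trivial eigenvalue. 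Since $J$ vanishes on $\mathbf{1}^\perp$, the remaining eigenvalues of $A$ equal $\tfrac{1}{p}$ times those of $B$ on $\mathbf{1}^\perp$, namely $\pm(p-1)p^{t/2-1}$ and $0$, all of absolute value at most $(p-1)p^{t/2-1}$. I expect the delicate points to be pinning down the sign of $B\mathbf{1}$ and verifying the multiplicity count on the invariant decomposition; the character-sum manipulations themselves are routine.
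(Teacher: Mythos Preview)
Your proof is correct. The paper does not supply its own proof of this statement; it simply cites the result from Vinh's paper~\cite{Vinh08a} (with a pointer also to Alon--Krivelevich~\cite{AlonK97}) and uses it as a black box. So there is no argument in the paper to compare your proposal against.

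For what it is worth, your character-sum approach is the standard route to the spectrum of such orthogonality graphs, and all the steps check out. The decomposition $pA = J + B$ with $B_{u,v} = \sum_{c\neq 0}\omega^{c\langle u,v\rangle}$ is clean (note that $B$ is manifestly real since $B = pA - J$, which also confirms symmetry). The computation of $B^2$ via orthogonality of characters is correct once one observes that summing over $c' \in \Fset_p^\ast$ is invariant under $c' \mapsto -c'$, so the sign in $\langle cu \pm c'w, v\rangle$ is immaterial. The three-way invariant decomposition of $\mathbb{C}^{p^t-1}$ into the constant functions, the line-constant functions orthogonal to $\mathbf 1$, and the functions with zero sum on every line is exactly the joint eigenspace decomposition of the commuting pair $(L,J)$, and your eigenvalue table for $L$, $J$, and hence $B^2$ is correct. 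Since $B$ is real symmetric, the eigenvalues of $B$ on $\mathbf 1^\perp$ are real square roots of $p^t(p-1)^2$ or of $0$, so those of $A = \tfrac{1}{p}(J+B)$ on $\mathbf 1^\perp$ lie in $\{0,\pm (p-1)p^{t/2-1}\}$, giving the claimed $\lambda$. The sign check $B\mathbf 1 = -(p-1)\mathbf 1$ you flagged as delicate is in fact the easy part, and it neatly recovers the trivial eigenvalue $p^{t-1}-1$.
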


By combining Theorems~\ref{thm:pseudo_B,C} and~\ref{thm:Vinh}, we obtain the following result.
\begin{theorem}\label{thm:vectors_B,C}
For a prime $p$ and an integer $t \geq 2$, let $C_1,C_2 \subseteq \Fset_p^t$ be two sets of vectors such that $\langle v_1,v_2 \rangle \neq 0$ for all $v_1 \in C_1$ and $v_2 \in C_2$. Then, $|C_1| \cdot |C_2| \leq p^{t+2}$.
\end{theorem}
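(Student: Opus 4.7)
The plan is a direct application of the pseudo-random property of the graph $G(p,t)$ recorded in Theorems~\ref{thm:pseudo_B,C} and~\ref{thm:Vinh}. First I would observe that the hypothesis $\langle v_1,v_2\rangle \neq 0$ for all $v_1 \in C_1$, $v_2 \in C_2$ forces $0 \notin C_1 \cup C_2$, so both sets are subsets of the vertex set of $G(p,t)$. The same hypothesis says precisely that there is no edge of $G(p,t)$ with one endpoint in $C_1$ and the other in $C_2$ (including loops, in case $C_1 \cap C_2 \neq \emptyset$), that is, $e(C_1,C_2)=0$.

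Next I would plug the parameters from Theorem~\ref{thm:Vinh} into Theorem~\ref{thm:pseudo_B,C}. Using $n = p^t-1$, $d = p^{t-1}-1$, and $\lambda = (p-1)\cdot p^{t/2-1}$, the inequality $|e(C_1,C_2) - (d/n)|C_1||C_2|| \leq \lambda \sqrt{|C_1||C_2|}$ becomes
\[
\frac{p^{t-1}-1}{p^{t}-1}\cdot |C_1|\cdot|C_2| \;\leq\; (p-1)\cdot p^{t/2-1} \cdot \sqrt{|C_1|\cdot|C_2|}.
\]
Writing $x = \sqrt{|C_1|\cdot|C_2|}$ and dividing by $x$, this rearranges to
\[
x \;\leq\; (p-1)\cdot p^{t/2-1}\cdot \frac{p^{t}-1}{p^{t-1}-1}.
\]

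The only non-trivial step is then the elementary estimate
\[
(p-1)\cdot\frac{p^{t}-1}{p^{t-1}-1} \;\leq\; p^{2},
\]
valid for every prime $p$ and every integer $t \geq 2$; this is equivalent to $p^{2}-p+1 \leq p^{t}$, which is immediate since $p^{t}\geq p^{2}$. Substituting this back yields $x \leq p^{t/2+1}$, so squaring gives $|C_1|\cdot|C_2| \leq p^{t+2}$, as required. I do not expect any real obstacle here: the whole argument is a two-line consequence of the eigenvalue bound, and the only thing one must be slightly careful about is handling the $-1$ shifts in $n$ and $d$ (which is precisely where the hypothesis $t \geq 2$ enters to keep $p^{t-1}-1 > 0$ and to make the ratio $(p^t-1)/(p^{t-1}-1)$ comparable to $p$).
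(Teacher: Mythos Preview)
Your proposal is correct and follows essentially the same route as the paper's proof: apply Theorem~\ref{thm:pseudo_B,C} to $G(p,t)$ with $e(C_1,C_2)=0$, rearrange to $\sqrt{|C_1|\cdot|C_2|}\leq n\lambda/d$, and bound the resulting ratio by $p^{t/2+1}$ using $t\geq 2$. The only omission is the trivial case where $C_1$ or $C_2$ is empty (your claim ``$0\notin C_1\cup C_2$'' and the division by $x$ both tacitly assume nonemptiness), which the paper disposes of in one sentence.
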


\begin{proof}
By Theorem~\ref{thm:Vinh}, the graph $G(p,t)$ is an $(n,d,\lambda)$-graph for $n = p^t-1$, $d = p^{t-1}-1$, and $\lambda = (p-1) \cdot p^{t/2-1}$.
Let $C_1,C_2 \subseteq \Fset_p^t$ be two sets of vectors such that $\langle v_1,v_2 \rangle \neq 0$ for all $v_1 \in C_1$ and $v_2 \in C_2$.
If either $C_1$ or $C_2$ is empty, then the assertion of the theorem trivially holds. Otherwise, the vectors of $C_1$ and $C_2$ are nonzero and thus form vertices of $G(p,t)$.
In this case, our assumption on $C_1$ and $C_2$ implies that the number $e(C_1,C_2)$ of pairs $(x_1,x_2)$ of adjacent vertices in $G(p,t)$ with $x_1 \in C_1$ and $x_2 \in C_2$ is $0$.
Using the assumption $t \geq 2$, we apply Theorem~\ref{thm:pseudo_B,C} to obtain that
\[ |C_1| \cdot |C_2| \leq \bigg ( \frac{n \cdot \lambda}{d} \bigg)^2 = \bigg ( \frac{(p^t-1) \cdot (p-1) \cdot p^{t/2-1}}{p^{t-1}-1} \bigg )^2 \leq ((p^2-1) \cdot p^{t/2-1})^2 \leq p^{t+2}.\]
This completes the proof.
\end{proof}

\subsection{A Key Lemma}

Equipped with Theorem~\ref{thm:vectors_B,C}, we are ready to prove the following key lemma.

\begin{lemma}\label{lemma:collectionC}
Let $p$ be a prime, and let $t \geq 2$ be an integer.
There exists a collection $\calC$ of pairs of subsets of $\Fset_p^t$ such that
\begin{enumerate}
  \item\label{itm:size} $|\calC| \leq p^{2t \cdot (t^{p-1}+p-1)}$,
  \item\label{itm:size_pairs} for every pair $(C_1,C_2) \in \calC$, it holds that $|C_1| \cdot |C_2| \leq p^{t+2}$, and
  \item\label{itm:contain} for every pair $(A_1, A_2)$ of subsets of $\Fset_p^t$ with $\langle v_1,v_2 \rangle \neq 0$ for all $v_1 \in A_1$ and $v_2 \in A_2$, there exists a pair $(C_1,C_2) \in \calC$ such that $A_1 \subseteq C_1$ and $A_2 \subseteq C_2$.
\end{enumerate}
\end{lemma}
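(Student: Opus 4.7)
The plan is to reduce the problem to a question about subspaces of an auxiliary vector space $\Fset_p^s$ by exhibiting a ``flipping'' map $g : \Fset_p^t \to \Fset_p^s$ that converts non-orthogonality in $\Fset_p^t$ into orthogonality in $\Fset_p^s$, and then pulling back pairs of mutually orthogonal subspaces of $\Fset_p^s$. Specifically, I set $s = t^{p-1} + (p-1)$ and define $g(v) = (v^{\otimes(p-1)},\, 1, 1, \ldots, 1)$, where the last $p-1$ coordinates all equal $1$. Using the multiplicativity of inner products under tensor products given in~\eqref{eq:tensor}, one computes
\[ \langle g(v_1), g(v_2) \rangle = \langle v_1, v_2 \rangle^{p-1} + (p-1) = \langle v_1, v_2 \rangle^{p-1} - 1 \]
in $\Fset_p$. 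Since Fermat's little theorem gives $a^{p-1} = 1$ for nonzero $a \in \Fset_p$ and $0^{p-1} = 0$, this inner product vanishes if and only if $\langle v_1, v_2 \rangle \neq 0$, so $g$ indeed flips orthogonality.

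With $g$ in hand, I define $\calC$ to consist of all pairs $(g^{-1}(W_1),\, g^{-1}(W_2))$, where $(W_1, W_2)$ ranges over ordered pairs of subspaces of $\Fset_p^s$ satisfying (i) each $W_i$ is spanned by vectors of the image $g(\Fset_p^t)$, and (ii) $\langle w_1, w_2 \rangle = 0$ for every $w_1 \in W_1$ and $w_2 \in W_2$; here $g^{-1}(W) = \{v \in \Fset_p^t : g(v) \in W\}$. For Item~\ref{itm:contain}, given a qualifying pair $(A_1,A_2)$ I take $W_i = \linspan(g(A_i))$: each $W_i$ is spanned by vectors of $g(\Fset_p^t)$ by definition, the hypothesis on $(A_1,A_2)$ together with the flipping property of $g$ gives $W_1 \perp W_2$ by bilinearity, and $A_i \subseteq g^{-1}(W_i) = C_i$ follows from $g(A_i) \subseteq W_i$. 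For Item~\ref{itm:size_pairs}, given $(C_1, C_2) \in \calC$ arising from $(W_1, W_2)$ and any $u_1 \in C_1,\, u_2 \in C_2$, the containments $g(u_1) \in W_1$, $g(u_2) \in W_2$ together with $W_1 \perp W_2$ force $\langle g(u_1), g(u_2) \rangle = 0$, hence $\langle u_1, u_2 \rangle \neq 0$ by the flipping property; Theorem~\ref{thm:vectors_B,C}, which applies since $t \geq 2$, then yields $|C_1| \cdot |C_2| \leq p^{t+2}$.

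It remains to establish the counting bound of Item~\ref{itm:size}. Every subspace of $\Fset_p^s$ spanned by vectors of $g(\Fset_p^t)$ has dimension at most $s$ and thus admits an ordered spanning sequence of length exactly $s$ with entries in $g(\Fset_p^t)$ (take any basis and pad to length $s$ by repeating one of its elements). The number of such sequences is at most $|g(\Fset_p^t)|^s \leq p^{ts}$, so the number of admissible subspaces is at most $p^{ts}$, and the number of admissible pairs is at most $p^{2ts} = p^{2t(t^{p-1}+p-1)}$. The main conceptual hurdle is the design of $g$: observing that tensoring $v$ with itself $p-1$ times realizes Fermat's exponent while the appended all-ones tail provides a shift by $-1$ that interchanges ``zero'' with ``nonzero,'' all while keeping the target dimension polynomial in $t$. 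Once this map is identified, the rest of the argument is linear-algebraic bookkeeping together with a straightforward count of spanning sequences.
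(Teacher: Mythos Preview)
Your proposal is correct and follows essentially the same approach as the paper: the same map $g(v) = (v^{\otimes(p-1)}, 1, \ldots, 1)$ into $\Fset_p^{t^{p-1}+p-1}$, the same collection $\calC$ of pairs $(g^{-1}(W_1), g^{-1}(W_2))$ with $W_1 \perp W_2$ spanned by vectors of the image of $g$, and the same verifications of the three items (including the appeal to Theorem~\ref{thm:vectors_B,C} for Item~\ref{itm:size_pairs} and the count via length-$s$ spanning sequences for Item~\ref{itm:size}). The only differences are expository.
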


\begin{proof}
Fix a prime $p$ and an integer $t \geq 2$.
We start with some definitions.
Let $g: \Fset_p^t \rightarrow \Fset_p^{t^{p-1}+p-1}$ denote the function that maps every vector $v \in \Fset_p^t$ to the vector
\[g(v) = (v^{\otimes p-1}, 1, \ldots, 1),\]
defined as the tensor product of $v$ with itself $p-1$ times followed by $p-1$ ones.
Observe that for every two vectors $v_1,v_2 \in \Fset_p^t$, it holds that
\begin{eqnarray}\label{eq:<g,g>}
\langle g(v_1), g(v_2) \rangle = \langle v_1^{\otimes p-1},v_2^{\otimes p-1} \rangle +(p-1) = \langle v_1, v_2 \rangle^{p-1} + (p-1).
\end{eqnarray}
For any set $W \subseteq \Fset^{t^{p-1}+p-1}$, we let
\[g^{-1}(W) = \{ v \in \Fset_p^t \mid g(v) \in W \}.\]
We define $\calC$ as the collection of all pairs $(C_1,C_2)$ defined by $C_1 = g^{-1}(W_1)$ and $C_2 = g^{-1}(W_2)$, where $W_1$ and $W_2$ are some orthogonal subspaces of $\Fset_p^{t^{p-1}+p-1}$ that are spanned by vectors of the image of the function $g$. We turn to proving that $\calC$ satisfies the three properties required by the lemma.

First, observe that the number of subspaces of $\Fset_p^{t^{p-1}+p-1}$ that are spanned by vectors of the image of the function $g$ does not exceed the number of possible choices of $t^{p-1}+p-1$ vectors from the image of $g$. Since the size of this image is at most $p^t$, the number of those subspaces is bounded by $(p^t)^{t^{p-1}+p-1} = p^{t \cdot (t^{p-1}+p-1)}$. The size of $\calC$ is bounded by the number of pairs of such subspaces, hence $|\calC| \leq p^{2t \cdot (t^{p-1}+p-1)}$, as required for Item~\ref{itm:size} of the lemma.

We proceed by proving that for every pair $(C_1,C_2) \in \calC$, it holds that $\langle v_1, v_2 \rangle \neq 0$ for all $v_1 \in C_1$ and $v_2 \in C_2$.
Consider a pair $(C_1,C_2) \in \calC$.
By definition, there exist orthogonal subspaces $W_1, W_2 \subseteq \Fset_p^{t^{p-1}+p-1}$ such that $C_1 = g^{-1}(W_1)$ and $C_2 = g^{-1}(W_2)$.
Consider two vectors $v_1 \in C_1$ and $v_2 \in C_2$. By definition, $g(v_1) \in W_1$ and $g(v_2) \in W_2$. Since $W_1$ and $W_2$ are orthogonal, it follows that $\langle g(v_1), g(v_2) \rangle = 0$. Using~\eqref{eq:<g,g>}, it follows that $\langle v_1, v_2 \rangle^{p-1} = 1$, hence $\langle v_1, v_2 \rangle \neq 0$, as required.
This allows us to apply Theorem~\ref{thm:vectors_B,C} and to obtain that $|C_1| \cdot |C_2| \leq p^{t+2}$, as required for Item~\ref{itm:size_pairs} of the lemma.

We finally prove that $\calC$ satisfies Item~\ref{itm:contain} of the lemma.
Let $(A_1, A_2)$ be a pair of subsets of $\Fset_p^t$ with $\langle v_1,v_2 \rangle \neq 0$ for all $v_1 \in A_1$ and $v_2 \in A_2$.
Consider the subsets $A'_1 = \{ g(v) \mid v \in A_1 \}$ and $A'_2 = \{ g(v) \mid v \in A_2 \}$ of $\Fset_p^{t^{p-1}+p-1}$.
Let $W_1 = \linspan(A'_1)$ and $W_2 = \linspan(A'_2)$ be the subspaces spanned by the vectors of $A'_1$ and $A'_2$ respectively, and notice that they are spanned by vectors of the image of $g$.
It clearly holds that $A_1 \subseteq g^{-1}(W_1)$ and $A_2 \subseteq g^{-1}(W_2)$.
We claim that the subspaces $W_1$ and $W_2$ are orthogonal. To this end, it suffices to show that the vectors of $A'_1$ are orthogonal to those of $A'_2$.
Consider two vectors $u_1 \in A'_1$ and $u_2 \in A'_2$. By definition, they satisfy $u_1 = g(v_1)$ and $u_2 = g(v_2)$ for some $v_1 \in A_1$ and $v_2 \in A_2$.
By our assumption on the pair $(A_1,A_2)$, it holds that $\langle v_1,v_2 \rangle \neq 0$, which implies using Fermat's little theorem that $\langle v_1, v_2 \rangle^{p-1} =1$.
Using~\eqref{eq:<g,g>}, it follows that
\[\langle u_1, u_2 \rangle = \langle g(v_1), g(v_2) \rangle = \langle v_1, v_2 \rangle^{p-1} + (p-1) = 0,\]
as desired.
Since $W_1$ and $W_2$ are orthogonal subspaces of $\Fset_p^{t^{p-1}+p-1}$ that are spanned by vectors of the image of the function $g$, there exists a pair $(C_1,C_2) \in \calC$ with $C_1 = g^{-1}(W_1)$ and $C_2 = g^{-1}(W_2)$.
This pair satisfies that $A_1 \subseteq C_1$ and $A_2 \subseteq C_2$, so we are done.
\end{proof}

\begin{remark}
It is well known that every element of a finite field is expressible as a sum of two squares.
Therefore, one can define the function $g$ in the above proof with the $p-1$ ones replaced by two fixed field elements whose sum of squares is $p-1$.
This change slightly decreases the size of the collection $\calC$ for $p \geq 5$, but makes no difference for our applications of Lemma~\ref{lemma:collectionC}.
\end{remark}

As a simple application of Lemma~\ref{lemma:collectionC}, we determine the number of sets of non-self-orthogonal vectors that are pairwise non-orthogonal in vector spaces over prime order fields. Note that this result is not needed for the subsequent proofs.

\begin{theorem}\label{thm:counting}
Let $p$ be a fixed prime.
Then, for all integers $t$, the number of sets of non-self-orthogonal vectors of $\Fset_p^t$ that are pairwise non-orthogonal is $2^{\Theta( p^{t/2})}$.
\end{theorem}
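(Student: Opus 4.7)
The plan is to establish matching upper and lower bounds of the form $2^{\Theta(p^{t/2})}$. For the upper bound, I would apply Lemma~\ref{lemma:collectionC} with the diagonal choice $A_1 = A_2 = A$: any set $A$ of non-self-orthogonal pairwise non-orthogonal vectors satisfies $\langle v_1, v_2 \rangle \neq 0$ for all $v_1, v_2 \in A$ (including $v_1 = v_2$), so the lemma furnishes a pair $(C_1, C_2) \in \calC$ with $A \subseteq C_1 \cap C_2$. Since
\[
|C_1 \cap C_2| \leq \min(|C_1|,|C_2|) \leq \sqrt{|C_1| \cdot |C_2|} \leq p^{(t+2)/2},
\]
every such $A$ is determined by naming a pair in $\calC$ together with a subset of this intersection, giving the count bound $|\calC| \cdot 2^{p^{(t+2)/2}} \leq p^{2t(t^{p-1}+p-1)} \cdot 2^{p^{(t+2)/2}}$. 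For fixed $p$, the logarithm base~$2$ of this quantity is $O(t^p) + O(p^{t/2}) = O(p^{t/2})$, since the exponential-in-$t$ term dominates the polynomial-in-$t$ term.

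For the lower bound, the plan is to exhibit a single set $S \subseteq \Fset_p^t$ of non-self-orthogonal pairwise non-orthogonal vectors of size $\Omega(p^{t/2})$; every one of its $2^{|S|}$ subsets then inherits these properties and is itself counted, producing $2^{\Omega(p^{t/2})}$ distinct sets as required. I would construct $S$ as an affine shift of a totally isotropic subspace. Fix a non-self-orthogonal vector $v_0 \in \Fset_p^t$ (say $v_0 = e_1$) and a subspace $W \subseteq v_0^\perp$ satisfying $\langle w_1, w_2 \rangle = 0$ for all $w_1, w_2 \in W$, and set $S = \{v_0 + w \mid w \in W\}$. A direct expansion using $\langle v_0, w_i \rangle = 0$ and $\langle w_1, w_2 \rangle = 0$ gives
\[
\langle v_0 + w_1, v_0 + w_2 \rangle = \langle v_0, v_0 \rangle \neq 0,
\]
so every vector of $S$ is non-self-orthogonal and every pair is non-orthogonal, and $|S| = |W|$.

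The main obstacle is producing a totally isotropic $W \subseteq v_0^\perp$ of dimension $\Omega(t)$, which is a Witt-index statement about the non-degenerate bilinear form induced on $v_0^\perp \cong \Fset_p^{t-1}$. The classical structure theory of symmetric bilinear forms over $\Fset_p$ guarantees such a $W$ of dimension at least $\lfloor t/2 \rfloor - 1$, so $|W| \geq p^{\lfloor t/2 \rfloor - 1} = \Omega(p^{t/2})$ for fixed $p$. For concreteness, in the $p=2$ case I would take $W$ to be the span of the paired-coordinate vectors $e_2 + e_3, e_4 + e_5, \ldots$ lying inside $e_1^\perp$, which forms a self-orthogonal binary code of dimension $\lfloor (t-1)/2 \rfloor$. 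For odd $p$, I would invoke the Witt decomposition of the induced form into hyperbolic planes together with an anisotropic kernel of dimension at most two, and take $W$ to be spanned by one isotropic vector from each hyperbolic pair. Combined with the upper bound, this delivers the claimed $2^{\Theta(p^{t/2})}$.
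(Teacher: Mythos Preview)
Your proposal is correct and follows essentially the same route as the paper. The upper bound is identical: the paper also applies Lemma~\ref{lemma:collectionC} with $A_1=A_2=A$, bounds $A$ inside a set of size at most $p^{t/2+1}$ coming from a pair of $\calC$, and concludes via $|\calC|\cdot 2^{p^{t/2+1}} = 2^{O(p^{t/2})}$. For the lower bound the paper uses the same ``affine shift of a totally isotropic set'' construction (phrased as appending a $1$ to each vector of a set $A\subseteq\Fset_p^{t-1}$ with all pairwise inner products zero), but obtains the large totally isotropic set by citing Zame's result~\cite{Zame70} rather than invoking the Witt decomposition directly; your version is therefore slightly more self-contained, while the paper's citation avoids the case analysis on $p$ and on the parity of the dimension.
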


\begin{proof}
Fix a constant prime $p$ and an integer $t \geq 2$, and let $N_{p,t}$ denote the number of sets of non-self-orthogonal vectors of $\Fset_p^t$ that are pairwise non-orthogonal.
We start with a lower bound on $N_{p,t}$.
It is shown in~\cite{Zame70} that there exists a set $A \subseteq \Fset_p^{t-1}$ of size $|A| \geq p^{t/2 - 2}$ such that $\langle x,y \rangle =0$ for all $x,y \in A$. Let $B \subseteq \Fset_p^{t}$ denote the set of vectors obtained by adding a $1$ entry at the end of the vectors of $A$, and observe that $\langle x,y \rangle =1$ for all $x,y \in B$.
Since the vectors of every subset of $B$ are non-self-orthogonal and pairwise non-orthogonal, it follows that $N_{p,t} \geq 2^{|B|} = 2^{|A|} \geq 2^{\Omega( p^{t/2})}$.

We proceed with an upper bound on $N_{p,t}$. Lemma~\ref{lemma:collectionC} implies that there exists a collection $\calC$ of pairs of subsets of $\Fset_p^t$ such that
\begin{itemize}
  \item $|\calC| \leq p^{O(t^{p})}$,
  \item for every pair $(C_1,C_2) \in \calC$, it holds that $\min(|C_1|,|C_2|) \leq p^{t/2+1}$, and
  \item for every set $A$ of non-self-orthogonal vectors of $\Fset_p^t$ that are pairwise non-orthogonal, there exists a pair $(C_1,C_2) \in \calC$ such that $A \subseteq C_1$ and $A \subseteq C_2$.
\end{itemize}
It follows that every set of non-self-orthogonal vectors of $\Fset_p^t$ that are pairwise non-orthogonal is a subset of some set of size at most $p^{t/2+1}$ that appears in the pairs of $\calC$. We therefore obtain that
\[ N_{p,t} \leq 2 \cdot |\calC| \cdot 2^{p^{t/2+1}} \leq p^{O(t^{p})} \cdot 2^{p^{t/2+1}} \leq 2^{O( p^{t/2})}.\]
This completes the proof.
\end{proof}

\subsection{Proof of Theorem~\ref{thm:GeneralFieldsBipartite}}

With Lemma~\ref{lemma:collectionC} at hand, we turn to the proof of Theorem~\ref{thm:GeneralFieldsBipartite}.

\begin{proof}[ of Theorem~\ref{thm:GeneralFieldsBipartite}]
It suffices to prove the theorem for prime order fields, because every field of characteristic $p$ contains the field of order $p$ as a sub-field.
Let $p$ be a prime, and consider the field $\Fset_p$ of $p$ elements.
For an integer $t \geq 2$, let $V$ denote the set of all non-self-orthogonal vectors of $\Fset_p^t$ whose first nonzero value is $1$.
We observe that
\begin{eqnarray}\label{eq:|V|}
|V| \geq \frac{p^{t-1}-1}{p-1} \geq p^{t-2}.
\end{eqnarray}
Indeed, for every choice of the first $t-1$ values of a vector of $\Fset_p^t$, it is possible to choose a value for the remaining entry (say, $0$ or $1$) to obtain a non-self-orthogonal vector. The bound in~\eqref{eq:|V|} follows by observing that the number of nonzero vectors of $\Fset_p^{t-1}$ in which the first nonzero value is $1$ is $(p^{t-1}-1)/(p-1)$.

For an integer $m$, let $Q$ denote the set of all vectors obtained by applying the tensor product operation on $m$ vectors of $V$, that is,
$Q = V^{\otimes m} \subseteq \Fset_p^{t^m}$.
Since the vectors of $V$ are nonzero and have $1$ as their first nonzero value, we can apply Claim~\ref{claim:|V|^m} to obtain that $|Q| = |V|^m$. Combined with~\eqref{eq:|V|}, this yields that
\begin{eqnarray}\label{eq:|Q|}
|Q| \geq p^{m \cdot (t-2)}.
\end{eqnarray}
Note that the vectors of $Q$ are non-self-orthogonal, because every vector $v = v_1 \otimes \cdots \otimes v_m \in Q$ satisfies $\langle v, v \rangle = \prod_{j=1}^{m}{\langle v_j,v_j \rangle} \neq 0$, where the inequality holds because the vectors $v_1, \ldots, v_m$ are members of $V$.

For an integer $n$, let $\calZ = (z_1, \ldots, z_n)$ be a random sequence of $n$ vectors chosen uniformly and independently from $Q$ (repetitions allowed).
The vectors of $\calZ$ are clearly non-self-orthogonal, because the vectors of $Q$ are.
We will show that for a given integer $k$ and for an appropriate choice of the integers $t$, $m$, and $n$, the set that consists of the vectors of $\calZ$ satisfies with positive probability the property declared in the theorem.
We start with some preparations.

By Lemma~\ref{lemma:collectionC}, there exists a collection $\calC$ of pairs of subsets of $\Fset_p^t$ such that
\begin{enumerate}
  \item\label{itm:sizeT} $|\calC| \leq p^{2t \cdot (t^{p-1}+p-1)}$,
  \item\label{itm:size_pairsT} for every pair $(C_1,C_2) \in \calC$, it holds that $|C_1| \cdot |C_2| \leq p^{t+2}$, and
  \item\label{itm:containT} for every pair $(A_1, A_2)$ of subsets of $\Fset_p^t$ with $\langle v_1,v_2 \rangle \neq 0$ for all $v_1 \in A_1$ and $v_2 \in A_2$, there exists a pair $(C_1,C_2) \in \calC$ such that $A_1 \subseteq C_1$ and $A_2 \subseteq C_2$.
\end{enumerate}
Consider the collection $\calB$ of all pairs $(B_1,B_2)$ of subsets of $\Fset_p^{t^m}$ of the form
\begin{eqnarray}\label{eq:B1B2}
B_1 = C_1^{(1)} \otimes C_1^{(2)} \otimes \cdots \otimes C_1^{(m)} \mbox{~~and~~} B_2 = C_2^{(1)} \otimes C_2^{(2)} \otimes \cdots \otimes C_2^{(m)},
\end{eqnarray}
where $(C^{(1)}_1,C^{(1)}_2), (C^{(2)}_1,C^{(2)}_2), \ldots, (C^{(m)}_1,C^{(m)}_2)$ are $m$ pairs of the collection $\calC$.
It follows from Item~\ref{itm:sizeT} above that
\begin{eqnarray}\label{eq:|calB|}
|\calB| \leq |\calC|^m \leq p^{2mt \cdot (t^{p-1}+p-1)}.
\end{eqnarray}
It further follows from Item~\ref{itm:size_pairsT} that for every pair $(B_1,B_2) \in \calB$ as in~\eqref{eq:B1B2}, it holds that
\begin{eqnarray}\label{eq:|B1||B2|}
|B_1| \cdot |B_2| \leq \prod_{j=1}^{m}{|C_1^{(j)}|} \cdot \prod_{j=1}^{m}{|C_2^{(j)}|} = \prod_{j=1}^{m}{|C_1^{(j)}| \cdot |C_2^{(j)}|} \leq p^{m \cdot (t+2)}.
\end{eqnarray}
We next claim that for every pair $(G_1,G_2)$ of subsets of $Q$ with $\langle v_1,v_2 \rangle \neq 0$ for all $v_1 \in G_1$ and $v_2 \in G_2$, there exists a pair $(B_1,B_2) \in \calB$ such that $G_1 \subseteq B_1$ and $G_2 \subseteq B_2$.
To see this, consider such a pair $(G_1,G_2)$ of subsets of $Q$.
For each $j \in [m]$, let $A^{(j)}_1 \subseteq \Fset_p^t$ and $A^{(j)}_2 \subseteq \Fset_p^t$ denote the $j$th projections of $G_1$ and $G_2$ respectively (see Section~\ref{sec:preliminaries}).
Using the property of tensor product given in~\eqref{eq:tensor}, the fact that the vectors of $G_1$ are not orthogonal to those of $G_2$ implies that for each $j \in [m]$, the vectors of $A^{(j)}_1$ are not orthogonal to those of $A^{(j)}_2$, hence by Item~\ref{itm:containT}, there exists a pair $(C^{(j)}_1,C^{(j)}_2) \in \calC$ such that $A^{(j)}_1 \subseteq C^{(j)}_1$ and $A^{(j)}_2 \subseteq C^{(j)}_2$. This implies that for some pair $(B_1,B_2) \in \calB$, defined as in~\eqref{eq:B1B2}, it holds that $G_1 \subseteq B_1$ and $G_2 \subseteq B_2$, as desired.

For a given integer $k$, consider the event $\calE$ that there exists a pair $(I_1,I_2)$ of subsets of $[n]$ with $|I_1|=|I_2|=k$, such that $\langle z_{i_1},z_{i_2} \rangle \neq 0$ for all $i_1 \in I_1$ and $i_2 \in I_2$.
The above discussion implies that for every such pair $(I_1,I_2)$, there exists a pair $(B_1,B_2) \in \calB$ such that $\{z_i \mid i \in I_1\} \subseteq B_1$ and $\{z_i \mid i \in I_2\} \subseteq B_2$.
For a fixed pair $(B_1,B_2) \in \calB$, suppose without loss of generality that $|B_1| \leq |B_2|$, and apply the union bound to obtain that the probability that there exists a set $I \subseteq [n]$ with $|I|=k$ such that $\{z_i \mid i \in I\} \subseteq B_1$ is at most
\[ \binom{n}{k} \cdot \bigg ( \frac{|B_1|}{|Q|} \bigg )^k \leq \bigg ( \frac{n \cdot |B_1| }{|Q|} \bigg )^k
\leq \bigg ( \frac{n \cdot p^{m \cdot (t+2)/2}}{p^{m \cdot (t-2)}} \bigg )^k = \bigg ( \frac{n}{p^{m \cdot (t/2-3)}} \bigg )^k, \]
where for the second inequality we have used~\eqref{eq:|Q|} and~\eqref{eq:|B1||B2|}.
We apply again the union bound, this time over all pairs of $\calB$, and use~\eqref{eq:|calB|} to obtain that
\begin{eqnarray}\label{eq:Prob[E]p}
\Prob{}{\calE} \leq |\calB| \cdot \bigg ( \frac{n}{p^{m \cdot (t/2-3)}} \bigg )^k \leq p^{2mt \cdot (t^{p-1}+p-1)} \cdot \bigg ( \frac{n}{p^{m \cdot (t/2-3)}} \bigg )^k.
\end{eqnarray}

We finally set the parameters of the construction, ensuring that the event $\calE$ occurs with probability smaller than $1$.
Let $d$ and $k$ be two integers satisfying $d \geq k^{1/(p-1)}$. Note that constant values of $k$ can be handled, using the $d$ vectors of the standard basis of $\Fset_p^d$, by an appropriate choice of the constant $\delta$ from the assertion of the theorem.
Let $t$ be the largest integer satisfying $k > 32 \cdot t^{p-1}$, and let $m$ be the largest integer satisfying $d \geq t^m$.
Assuming that $k$ is sufficiently large, we have $t \geq 2$, and by the assumption $d \geq k^{1/(p-1)}$, we have $m \geq 1$.
Set $n = \lfloor p^{m \cdot t/4} \rfloor$.
We obtain that
\[ \Prob{}{\calE} \leq p^{2mt \cdot (t^{p-1}+p-1)} \cdot \bigg ( \frac{1}{p^{m \cdot (t/4-3)}} \bigg )^k \leq p^{4m \cdot t^{p}} \cdot \bigg ( \frac{1}{p^{m \cdot t/8}} \bigg )^k  <1,\]
where the first inequality holds by combining~\eqref{eq:Prob[E]p} with our choice of $n$, the second by the assumption that $k$ and $t$ are sufficiently large (specifically, $p-1 \leq t^{p-1}$ for $t \geq 2$, and $t/4-3 \geq t/8$ for $t \geq 24$), and the third by our choice of $t$.
This implies that there exists a choice for the $n$ vectors of the sequence $\calZ$ for which the event $\calE$ does not occur, that is, for every pair $(I_1,I_2)$ of subsets of $[n]$ with $|I_1|=|I_2|=k$, there exist indices $i_1 \in I_1$ and $i_2 \in I_2$ with $\langle z_{i_1},z_{i_2} \rangle = 0$.
For this choice, let $\calG = \{z_i \mid i \in [n]\}  \subseteq \Fset_p^{t^m}$ denote the set that consists of the vectors of $\calZ$. It follows that for every two sets $G_1,G_2 \subseteq \calG$ with $|G_1| = |G_2|=k$, there exist vectors $v_1 \in G_1$ and $v_2 \in G_2$ with $\langle v_1, v_2 \rangle = 0$. It further follows that no vector appears in $\calZ$ more than $k-1$ times, hence $|\calG| \geq n/(k-1)$.
We thus obtain that
\[ |\calG| \geq n/(k-1) \geq p^{\Omega(m \cdot t)} \geq p^{\Omega((\log d ) \cdot t/\log t)} \geq d^{\Omega(t/\log t)} \geq d^{\Omega(k^{1/(p-1)}/ \log k)} ,\]
where the $\Omega$ notation hides constants that depend solely on $p$.
By adding $d-t^m$ zero entries at the end of the vectors of $\calG$, we obtain the desired subset of $\Fset_p^d$, and the proof is completed.
\end{proof}

\subsection{Orthogonal Representations}\label{sec:OD}

In this section, we present some applications of our results to the context of orthogonal representations of graphs.
A $d$-dimensional orthogonal representation of a graph $G=(V,E)$ over a field $\Fset$ is an assignment of a non-self-orthogonal vector $u_x \in \Fset^d$ to each vertex $x \in V$, such that $\langle u_x, u_y \rangle = 0$ whenever $x$ and $y$ are distinct non-adjacent vertices of $G$.
For a graph $G$ and a field $\Fset$, let $\xi_\Fset(G)$ denote the smallest integer $d$ for which $G$ admits a $d$-dimensional orthogonal representation over $\Fset$. We start with the following result.

\begin{theorem}\label{thm:K_k,k}
For every prime $p$ there exists a constant $c=c(p) >0$ such that for every field $\Fset$ of characteristic $p$ and for every integer $k \geq 2$, there are infinitely many integers $n$ for which there exists a $K_{k,k}$-free graph $G$ on $n$ vertices such that
\[ \xi_\Fset(G) \leq n^{c \cdot (\log k) / k^{1/(p-1)}}.\]
\end{theorem}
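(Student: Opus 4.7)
The plan is to derive Theorem~\ref{thm:K_k,k} as an essentially immediate consequence of Theorem~\ref{thm:GeneralFieldsBipartite}, by reading the set of vectors produced there as a graph together with an orthogonal representation of itself.

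Fix a prime $p$ and a field $\Fset$ of characteristic $p$, and let $\delta = \delta(p)$ be the constant supplied by Theorem~\ref{thm:GeneralFieldsBipartite}. Given $k \geq 2$, for each integer $d$ with $d \geq k^{1/(p-1)}$ I would apply Theorem~\ref{thm:GeneralFieldsBipartite} to obtain a set $\calG \subseteq \Fset^d$ of non-self-orthogonal vectors with $|\calG| \geq d^{\delta \cdot k^{1/(p-1)}/\log k}$, such that every two subsets $G_1, G_2 \subseteq \calG$ of size $k$ contain vectors $v_1 \in G_1$ and $v_2 \in G_2$ with $\langle v_1, v_2 \rangle = 0$. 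Let $G$ be the graph on vertex set $\calG$ in which two distinct vectors are adjacent if and only if they are not orthogonal. Since every vector in $\calG$ is non-self-orthogonal, the identity map $v \mapsto v$ assigns a non-self-orthogonal vector of $\Fset^d$ to each vertex, and any two distinct non-adjacent vertices are, by construction, orthogonal. Hence $\calG$ itself is a $d$-dimensional orthogonal representation of $G$ over $\Fset$, giving $\xi_\Fset(G) \leq d$.

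Next I would verify that $G$ is $K_{k,k}$-free. A copy of $K_{k,k}$ in $G$ would correspond to two disjoint subsets $G_1, G_2 \subseteq \calG$ with $|G_1| = |G_2| = k$ such that every vertex of $G_1$ is adjacent in $G$ to every vertex of $G_2$, i.e., every vector of $G_1$ is non-orthogonal to every vector of $G_2$. This directly contradicts the defining property of $\calG$ in Theorem~\ref{thm:GeneralFieldsBipartite} (noting that the orthogonal pair $v_1, v_2$ must be distinct, as the vectors are non-self-orthogonal). Setting $n = |\calG|$, the bound $n \geq d^{\delta \cdot k^{1/(p-1)}/\log k}$ rearranges to $d \leq n^{(\log k)/(\delta \cdot k^{1/(p-1)})}$, so that with $c = c(p) = 1/\delta$ we obtain $\xi_\Fset(G) \leq d \leq n^{c \cdot (\log k) / k^{1/(p-1)}}$.

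Letting $d$ range over all integers with $d \geq k^{1/(p-1)}$ yields graphs on unboundedly many values of $n$, which gives the required infinite family. I do not foresee a substantive obstacle here: essentially all of the work has been done in Theorem~\ref{thm:GeneralFieldsBipartite}, and the only thing to notice is that the bipartite orthogonality condition stated there is precisely the condition that rules out a $K_{k,k}$ in the complement of the orthogonality graph of $\calG$.
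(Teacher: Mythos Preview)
Your proposal is correct and follows essentially the same approach as the paper: both apply Theorem~\ref{thm:GeneralFieldsBipartite}, form the non-orthogonality graph on the resulting set $\calG$, observe that the identity assignment is a $d$-dimensional orthogonal representation, and use the bipartite condition to rule out a copy of $K_{k,k}$. Your write-up in fact spells out a couple of details (why the identity map is a valid orthogonal representation, and the explicit choice $c=1/\delta$) that the paper leaves implicit.
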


\begin{proof}
Let $p$ be a prime, let $\Fset$ be a field of characteristic $p$, and let $k \geq 2$ be an integer.
By Theorem~\ref{thm:GeneralFieldsBipartite}, there exists some $\delta = \delta(p) >0$ such that for every integer $d \geq k^{1/(p-1)}$, there exists a set $\calG$ of non-self-orthogonal vectors of $\Fset^d$ of size
\[ |\calG| \geq d^{\delta \cdot k^{1/(p-1)}/\log k},\]
such that for every two sets $G_1, G_2 \subseteq \calG$ with $|G_1|=|G_2|=k$, there exist vectors $v_1 \in G_1$ and $v_2 \in G_2$ with $\langle v_1,v_2 \rangle = 0$.
Consider the graph $G$ on the vertex set $\calG$, where two distinct vertices are adjacent if and only if their vectors are not orthogonal, and let $n$ denote the number of its vertices.
It follows that the graph $G$ is $K_{k,k}$-free and that $\xi_\Fset(G) \leq d \leq n^{c \cdot (\log k) / k^{1/(p-1)}}$ for some $c = c(p)>0$.
This completes the proof.
\end{proof}

The clique cover number of a graph $G$, denoted by $\overline{\chi}(G)$, is the minimum number of cliques needed to cover the vertex set of $G$.
The following theorem provides graphs $G$ with a large ratio between $\overline{\chi}(G)$ and $\xi_\Fset(G)$ for fields $\Fset$ with finite characteristic.

\begin{theorem}\label{thm:CHIvsOD}
For every prime $p$ there exists a constant $c=c(p) >0$ such that for every field $\Fset$ of characteristic $p$, there are infinitely many integers $n$ for which there exists a graph $G$ on $n$ vertices such that
\[ \frac{\overline{\chi}(G)}{\xi_\Fset(G)} \geq c \cdot \frac{n}{\log^p n}.\]
\end{theorem}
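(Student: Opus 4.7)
The plan is to invoke Theorem~\ref{thm:K_k,k}, or equivalently revisit its proof via Theorem~\ref{thm:GeneralFieldsBipartite}, with a carefully balanced choice of parameters. The key observation is that any $K_{k,k}$-free graph contains no copy of $K_{2k}$ (since any $2k$-clique admits a balanced bipartition witnessing a $K_{k,k}$ subgraph), so the graph $G$ furnished by Theorem~\ref{thm:K_k,k} has no clique of size $2k$, and consequently $\overline{\chi}(G) \geq n/(2k-1)$ because every clique in a clique cover contains at most $2k-1$ vertices. The remaining task is to choose $k$ so that both $k$ and the dimension bound $\xi_\Fset(G)$ are polylogarithmic in $n$.

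Concretely, for each integer $d \geq 2$, I would apply Theorem~\ref{thm:GeneralFieldsBipartite} with the coupling $k = d^{p-1}$, so that $k^{1/(p-1)} = d$ and the hypothesis $d \geq k^{1/(p-1)}$ holds with equality (and $k \geq 2$ automatically). The resulting set $\calG \subseteq \Fset^d$ satisfies
\[
|\calG| \geq d^{\delta k^{1/(p-1)}/\log k} = d^{\delta d/((p-1)\log d)}.
\]
Writing $n = |\calG|$ and taking logarithms, one obtains $\log n \geq \delta d/(p-1)$, so $d \leq (p-1)(\log n)/\delta$. Defining $G$ on vertex set $\calG$ with edges between non-orthogonal pairs, exactly as in the proof of Theorem~\ref{thm:K_k,k}, yields a $K_{k,k}$-free graph with $\xi_\Fset(G) \leq d = O(\log n)$ and $2k-1 \leq 2 d^{p-1} = O(\log^{p-1} n)$, whence
\[
\frac{\overline{\chi}(G)}{\xi_\Fset(G)} \geq \frac{n/(2 d^{p-1})}{d} = \frac{n}{2 d^p} \geq \frac{\delta^p}{2(p-1)^p} \cdot \frac{n}{\log^p n}.
\]
Letting $d$ range over integers $\geq 2$ produces infinitely many distinct values of $n$, since the size lower bound forces $n \to \infty$ with $d$, so this furnishes the required infinite family.

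The main obstacle is the calibration between $k$ and $d$: taking $k$ too small keeps $\xi_\Fset(G) \leq d$ small but leaves the clique cover bound $n/(2k-1)$ weak, while taking $k$ too large inflates both $k$ and the ambient dimension $d \geq k^{1/(p-1)}$. The choice $k = d^{p-1}$ is the sweet spot, yielding simultaneously $k = \Theta(\log^{p-1} n)$ and $d = \Theta(\log n)$, which combine into precisely the $\log^p n$ denominator; a looser coupling would only buy a weaker bound such as $n/\log^{p+\epsilon} n$. The $K_{k,k}$-freeness is inherited directly from the bipartite guarantee of Theorem~\ref{thm:GeneralFieldsBipartite}, and the extension from $\Fset_p$ to arbitrary fields of characteristic $p$ is automatic since $\Fset_p$ embeds in any such field.
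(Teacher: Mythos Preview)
Your proof is correct and follows essentially the same approach as the paper: both choose the coupling $k = d^{p-1}$, build the graph on the resulting vector set, and compute $\overline{\chi}(G)/\xi_\Fset(G) \geq n/(k \cdot d) \geq \Omega(n/\log^p n)$. The only cosmetic difference is that the paper invokes Theorem~\ref{thm:GeneralFields} directly to obtain a $K_{k+1}$-free graph (giving $\overline{\chi}(G) \geq n/k$), whereas you route through Theorem~\ref{thm:GeneralFieldsBipartite} to get $K_{k,k}$-freeness and then deduce $K_{2k}$-freeness (giving $\overline{\chi}(G) \geq n/(2k-1)$); since Theorem~\ref{thm:GeneralFields} is itself derived from Theorem~\ref{thm:GeneralFieldsBipartite}, the two arguments are equivalent up to a harmless factor of $2$ absorbed into $c$.
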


\begin{proof}
Let $p$ be a prime, and let $\Fset$ be a field of characteristic $p$.
We apply Theorem~\ref{thm:GeneralFields} with integers $d$ and $k$ satisfying $k = d^{p-1}$.
We obtain that for some $\delta = \delta(p) >0$, there exists a $k$-nearly orthogonal set $\calG \subseteq \Fset^d$ of size
\[ |\calG| \geq d^{\delta \cdot k^{1/(p-1)}/ \log k} = 2^{\frac{\delta}{p-1} \cdot  k^{1/(p-1)}}.\]
Consider the graph $G$ on the vertex set $\calG$, where two distinct vertices are adjacent if and only if their vectors are not orthogonal, and let $n$ denote the number of its vertices.
It follows that the graph $G$ is $K_{k+1}$-free for $k \leq O( \log^{p-1} n)$ and that $\xi_\Fset(G) \leq d$ for $d = k^{1/(p-1)} \leq O(\log n)$.
Since $G$ has no clique of size $k+1$, its clique cover number satisfies $\overline{\chi}(G) \geq n/k$.
We derive that for some $c = c(p)>0$, it holds that
\[ \frac{\overline{\chi}(G)}{\xi_\Fset(G)} \geq \frac{n}{k \cdot d} \geq c \cdot \frac{n}{\log^p n}.\]
This completes the proof.
\end{proof}

\begin{remark}\label{remark:CHI}
We note that for every field $\Fset$ and for every graph $G$ on $n$ vertices, the ratio between $\overline{\chi}(G)$ and $\xi_\Fset(G)$ is bounded from above by $O(n/\log^2 n)$. Indeed, a result of Erd{\H{o}}s~\cite{Erdos67} shows such a bound on the ratio between $\overline{\chi}(G)$ and the independence number of $G$, and the latter is bounded from above by $\xi_\Fset(G)$ for every field $\Fset$.
It thus follows that the bound achieved in Theorem~\ref{thm:CHIvsOD} for the binary field is tight up to a multiplicative constant.
\end{remark}

\section*{Acknowledgments}
We are grateful to the anonymous reviewers for their useful suggestions.

\bibliographystyle{abbrv}
\bibliography{nearly}

\appendix

\section{Proof of Theorem~\ref{thm:pseudo_B,C}}\label{app:spectral}

As mentioned before, Theorem~\ref{thm:pseudo_B,C} is proved in~\cite[Chapter~9.2]{AlonS16} for simple graphs.
We present here the proof and verify that it extends to graphs with loops, at most one at each vertex.
Recall that we adopt the convention that a loop contributes $1$ to the degree of the corresponding vertex.

\begin{proof}[ of Theorem~\ref{thm:pseudo_B,C}]
Let $G=(V,E)$ be an $(n,d,\lambda)$-graph with at most one loop at every vertex, and let $A$ denote the adjacency matrix of $G$.
Since $G$ is $d$-regular, where every loop contributes $1$ to the degree of its vertex, each row of $A$ has precisely $d$ ones.
It thus follows that the largest eigenvalue of $A$ is $d$, corresponding to the all-one vector.

Let $C_1,C_2 \subseteq V$ be two sets of vertices, and put $c = |C_1|/n$.
For a vertex $v \in V$, let $N_{C_1}(v)$ denote the set of all vertices of $C_1$ that are adjacent to $v$ in $G$ (including itself, if it lies in $C_1$ and has a loop).
Let $f: V \rightarrow \R$ denote the vector defined by $f(v)=1-c$ for $v \in C_1$ and $f(v)=-c$ for $v \notin C_1$. Notice that $\sum_{v \in V}{f(v)} = c n \cdot (1-c) - (n-c n) \cdot c = 0$, hence $f$ is orthogonal to the eigenvector of the largest eigenvalue of $A$. This implies that
\begin{eqnarray}\label{eq:<Af,Af>}
\langle Af, Af \rangle \leq \lambda^2 \cdot \langle f,f \rangle.
\end{eqnarray}
Observe that
\[\langle f,f \rangle = c n \cdot (1-c)^2 + (n-c n) \cdot c^2 = c n \cdot (1-c).\]
Observe further that
\[\langle Af,Af \rangle = \sum_{v \in V}{( |N_{C_1}(v)| \cdot (1-c)-(d-|N_{C_1}(v)|) \cdot c )^2} = \sum_{v \in V}{(|N_{C_1}(v)|-c d)^2},\]
where we again use the fact that a loop contributes $1$ to the degree of its vertex.
It therefore follows from~\eqref{eq:<Af,Af>} that
\begin{eqnarray}\label{eq:spectral}
\sum_{v \in V}{(|N_{C_1}(v)|-c d)^2} \leq \lambda^2 \cdot c n \cdot (1-c).
\end{eqnarray}
Using the Cauchy--Schwarz inequality, we derive from~\eqref{eq:spectral} that
\begin{eqnarray*}
\bigg  |e(C_1,C_2) - \frac{d}{n} \cdot |C_1| \cdot |C_2| \bigg | & \leq & \sum_{v \in C_2}{ \big| |N_{C_1}(v)|-cd \big |} \\
&\leq & \sqrt{|C_2|} \cdot \bigg ( \sum_{v \in C_2}{ \big ( |N_{C_1}(v)|-cd \big )^2} \bigg )^{1/2} \\
&\leq & \sqrt{|C_2|} \cdot \bigg ( \sum_{v \in V}{ \big ( |N_{C_1}(v)|-cd \big )^2} \bigg )^{1/2} \\
& \leq & \sqrt{|C_2|} \cdot \lambda \cdot \sqrt{c n \cdot (1-c)} \\
&\leq &  \lambda \cdot  \sqrt{|C_1| \cdot |C_2|}.
\end{eqnarray*}
This completes the proof.
\end{proof}

\end{document}